
\documentclass[11pt,twoside,leqno]{article}  %twocolumn

\usepackage[hmargin=1.1in,vmargin=1.1in]{geometry}

\usepackage{graphicx}
\usepackage{amsthm}
\usepackage{amssymb}
\usepackage{amsmath}
\usepackage{setspace}
\usepackage{cite}
\usepackage[usenames,dvipsnames]{color}

%%%%%%%%%%%%%%%%%%Commenting macros%%%%%%%%%%%%%%%%%
\def\final{0}  % set this to 1 to get a comment-free version
\def\iflong{\iffalse}
\ifnum\final=0  %namely if we allow comments in the output

\newcommand{\nnote}[1]{[{\tiny navin: \bf #1}]\marginpar{*}}

\else % in this case [final=1] we don't want any comments to show
\fi  %ok, we're done with defining comment macros
%%%%%%%%%%%%%%%%%%%%%%%%%%%%%%%%%%%%%%%%%%%%%%%%%%%%

%BH
%%% theorems, lemmas, claims, blah blah.
\newtheorem{theorem}{Theorem}
\newtheorem{lemma}[theorem]{Lemma}
\newtheorem{corollary}[theorem]{Corollary}
\newtheorem{prop}{Proposition}
\newtheorem{definition}{Definition}

\renewcommand{\qed}{\nobreak \ifvmode \relax \else
      \ifdim\lastskip<1.5em \hskip-\lastskip
      \hskip1.5em plus0em minus0.5em \fi \nobreak
      \vrule height0.75em width0.5em depth0.25em\fi}

\iffalse

\newenvironment{proofQED}[1][Proof]{\begin{trivlist} \item[\hskip \labelsep {\bfseries #1}]}{\qed \end{trivlist}}

%
{\unskip\nobreak\hskip 1em plus 1fil\nobreak$\Box$
\parfillskip=0pt%
\endtrivlist}

\fi

%BH
%%% new definitions -- reals, epsilons, blah blah.
%\def\reals{\mathbb{R}}
%\def\R{\reals}
%\def\sph{\mathbb{S}}
%\def\suchthat{\;:\;}
%\def\psd{\succcurlyeq}
\def\eps{\epsilon}
\def\vbl{\mathrm{vbl}}

\def\P{\mathcal{P}}
\def\A{\mathcal{A}}
%\def\ds{\displaystyle}
%\def\Root{\Root}

%BH
%%% new commands -- norms, operators, blah blah.
%\newcommand{\abs}[1]{\left|#1\right|}
%\newcommand{\norm}[1]{\left\|#1\right\|}
%\newcommand{\sqnorm}[1]{\left\|#1\right\|^{2}}
%\newcommand{\size}[1]{\left|#1\right|}
\newcommand{\prob}[1]{{\sf Pr}\left(#1\right)}
\newcommand{\ceil}[1]{\lceil #1 \rceil}

\newcommand{\asabove}{}

\begin{document}

\title{\Large Deterministic Algorithms for the Lov\'asz Local Lemma\footnote{A preliminary version of this work appeared in the ACM-SIAM Symposium on Discrete Algorithms, 2010.}}
\author{Karthekeyan Chandrasekaran \thanks{Georgia Institute of Technology; This work was done while visiting Microsoft Research, India.} \\ \small karthe@gatech.edu \and Navin Goyal \thanks{Microsoft Research, India.} \\ \small navingo@microsoft.com \and Bernhard Haeupler \thanks{Massachusetts Insititute of Technology, Computer Science and Artificial Intelligence Lab, 32 Vassar Street, Cambridge MA 02139; This work was partially supported by an MIT Presidential Fellowship from Akamai. This work was done while visiting Microsoft Research, India.}\\ \small haeupler@mit.edu }
\date{}

\maketitle

%\pagenumbering{arabic}
%\setcounter{page}{1}%Leave this line commented out.

\begin{abstract}
The Lov\'asz Local Lemma~\cite{ErdosLovasz} (LLL) is a powerful result in probability theory that informally states the following: the probability that none of a set of bad events happens is positive if the probability of each event is small 
compared to the number of events that depend on it. The LLL is often
used for non-constructive existence proofs of combinatorial
structures. A prominent application is to $k$-CNF formulas, where the
LLL implies that if every clause in a formula shares variables with at
most $d \leq 2^k/e-1$ other clauses then such a formula has a
satisfying assignment.  Recently, a randomized algorithm to
efficiently construct a satisfying assignment in this setting was
given by Moser \cite{Moser08B}. Subsequently Moser and Tardos
\cite{MoserTardos} gave a general algorithmic framework for the LLL
and a randomized algorithm within this framework to construct the structures guaranteed by
the LLL. The main problem left open by Moser
and Tardos was to design an efficient deterministic 
algorithm for constructing structures guaranteed by the LLL. In this paper we
provide such an algorithm. Our algorithm
works in the general framework of Moser--Tardos with a minimal loss in
parameters. For the special case of constructing satisfying
assignments for $k$-CNF formulas with $m$ clauses, where each clause
shares variables with at most $d \leq 2^{k/(1+\eps)}/e - 1$ other clauses,
for any $\eps\in (0,1)$, we give a deterministic algorithm that finds
a satisfying assignment in time $\tilde{O}(m^{2(1+1/\eps)})$. This
improves upon the deterministic algorithms of Moser and of
Moser--Tardos with running times $m^{\Omega(k^2)}$ and $m^{\Omega(d
\log d)}$ respectively, which are super-polynomial for $k=\omega(1)$ and $d=\omega(1)$, and upon the previous best deterministic algorithm of Beck which runs in polynomial time only for $d\leq 2^{k/16}/4$. Our algorithm is the first deterministic algorithm that works in the general framework of Moser--Tardos.  We also give a parallel NC algorithm for the same setting improving upon an algorithm of Alon \cite{Alon}.% that requires an even smaller exponent. %$d = O(2^{k/500})$.
\end{abstract}

\section{Introduction}

The Lov\'asz Local Lemma~\cite{ErdosLovasz} (henceforth LLL) informally states that the probability that none of a set of bad events happens is nonzero if the probability of each event is small compared to the number of events that depend on it (see Section \ref{sec:nonconstructive-LLL} for details). It is a powerful result in probability theory and is often used in conjunction with the probabilistic method to prove the existence of combinatorial structures. For this, one designs a random process guaranteed to generate the desired structure if none of a set of bad events happens. If the events satisfy the above-mentioned condition, then the LLL guarantees that the probability that the random process builds the desired structure is positive, thereby implying its existence. 
For many applications of the LLL, it is also important to find the desired structures efficiently.
Unfortunately, the original proof of the LLL \cite{ErdosLovasz} does not lead to an efficient algorithm.  
In many applications of the LLL, the probability of none of the bad events happening is negligible. Consequently, the same random process does not directly provide a randomized algorithm to find the desired structure. %The algorithmic aspect of the LLL becomes more important because, 
Further, 
in most applications where the LLL is useful (e.g., \cite{Feige, LeightonMaggsRao, MolloyReedBook}), the proof of existence of the desired structure is known only through the LLL (one exception to this is
\cite{Feige}). 
Thus, an efficient algorithm for the LLL would also lead to an efficient algorithm to find these desired structures.
Starting with the work of Beck~\cite{Beck}, a number of papers, e.g. \cite{Alon, CzumajS, MolloyReed, Srinivasan, Moser08A}, have sought to make the LLL algorithmic. Before discussing 
these results in more detail we describe the LLL formally. 

\subsection{The Lov\'asz Local Lemma}\label{sec:nonconstructive-LLL}
The Lov\'asz Local Lemma gives a lower bound on the probability of
avoiding a possibly large number of ``bad'' events that are not ``too
dependent'' on each other. Let $\A$ be a finite set of events in a
probability space. Let $G$ be an undirected graph on vertex set $\A$
with the property that every event $A \in \A$ is mutually
independent\footnote{An event $A$ is mutually independent of a set of
events $\{B_1, B_2, \ldots\}$ if $\prob{A} = \prob{A \ | \
f(B_1,B_2, \ldots)}$ for every function $f$ that can be expressed using finitely many unions and intersections of the arguments.} of the set of all events not in its neighborhood. We assume throughout that $G$ does not contain any self-loops. We denote the set of neighbors of an event $A$ by $\Gamma(A)$, i.e., $\Gamma(A) := \{B \in \A \ \ | \ \ \{A,B\}
 \in E(G) \}$. %% and the set of non-neighbors by $\overline{\Gamma(A)}$.
The general version of the LLL is the following.

\begin{theorem}\cite{ErdosLovasz,spencer1975ramsey} \label{thm:nonconstructive-lll-general}
For $\A$ and $G$ as defined above, suppose there exists an assignment of reals $x:\mathcal{A} \rightarrow (0,1)$ such that for all $A \in \A$, 
$$\prob{A} \leq x(A) \prod_{B \in \Gamma(A)}(1-x(B)).$$
Then the probability of avoiding all events in $\A$ is non-zero. More precisely,
$$\prob{\bigcap_{A \in \A} \overline{A}} \geq \prod_{A \in \A} (1 - x(A)) > 0.$$
\end{theorem}

%% A slightly simplified version of the LLL which fixes $x(A) = e \cdot \prob{A}$ can be interpreted in the following way: If for every event a ''local'' union bound over its dependent neighbors stays only a small constant factor (namely $\frac{1}{e}$) below one than the probability that globally no event happens is positive. To demonstrate this formally we present an even simpler variant of the LLL which assumes that the event probabilities have a good uniform upper bound. This version is called the symmetric version and while it is less general than the general LLL it suffices in many contexts. 
A simple corollary of the LLL, called symmetric LLL, often suffices in several applications. 
In this version there is a uniform upper bound $p$ on the probability of each event and a uniform upper bound $d$ on the number of neighbors of each event in the dependency graph. This quantity $|\Gamma(A)|$ is also called the dependency degree of the event $A$. 
\begin{corollary}\cite{ErdosLovasz} \label{thm:nonconstructive-lll-symmetric}
If each event $A \in \A$ occurs with probability at most $p$ and has dependency
degree $|\Gamma(A)| \leq d$ such that $d \leq {1}/{ep} - 1$, then the probability that none of the events occur is positive.
\end{corollary}
\begin{proof}
Setting $x(A) = 1/({d+1})$ for all events $A \in \A$ shows that the conditions of Theorem~\ref{thm:nonconstructive-lll-general} are 
satisfied:
$$\prob{A}\le p \leq \frac{1}{e(d+1)} \leq \frac{1}{d+1} \left(1 - \frac{1}{d+1}\right)^{d}.$$
\end{proof}

The power of the symmetric version is well-demonstrated by showing a satisfiability result for $k$-CNF formulas, i.e., Boolean formulas in conjunctive normal form with $k$ variables per clause. This classic application of the LLL will help in understanding our and previous results and techniques and therefore will be a running example in the rest of the paper.

\begin{corollary} \label{cor:nonconstructive-ksat}
Every $k$-CNF formula in which every clause shares variables with at most $2^{k}/e - 1$ other clauses is satisfiable.
\end{corollary}
\begin{proof}
To apply the symmetric LLL (i.e., Corollary~\ref{thm:nonconstructive-lll-symmetric}) we choose the
probability space to be the product space of each variable being chosen
true or false independently with probability $1/2$. For each clause $C$ we
define an event $A_C$ that is said to occur if and only if clause
$C$ is not satisfied. Clearly, two events $A_C$ and $A_{C'}$ are
independent unless the clauses $C$ and $C'$ share variables. Now take
$G$ to be the graph on the events with edges between events $A_C$ and
$A_{C'}$ iff $C$ and $C'$ share variables. It is clear that each event
$A_C$ is mutually independent of its non-neighbors in $G$. By
assumption each event has at most $d \leq (2^{k}/e)
- 1$ neighbors. Moreover, the probability $p$ that a clause is not
satisfied by a random assignment is exactly $2^{-k}$. The requirement
$e p (d+1) \leq 1$ of Corollary~\ref{thm:nonconstructive-lll-symmetric} is
therefore met and hence we obtain that the probability that none of the events occur is
positive. The satisfiability of the $k$-CNF formula follows.
\end{proof}

\subsection{Previous work}
Algorithms for the LLL are often targeted towards one of two model
problems: $k$-CNF formula satisfiability and $k$-uniform hypergraph
2-coloring. Interesting in their own right, these problems capture the
essence of the LLL without many technicalities. Moreover, algorithms
for these problems usually lead to algorithms for more general
applications of the LLL \cite{CzumajS,MolloyReedBook,czumaj2000cnu}. As shown in
Section~\ref{sec:nonconstructive-LLL}, for the $k$-CNF formula
satisifiability problem, the LLL implies that every $k$-CNF formula in
which each clause shares variables with at most $2^k/e -1$ other
clauses has a satisfying assignment. 
Similarly, it can be shown that the vertices of a $k$-uniform hypergraph, in which each edge shares variables with at most $2^k/e-1$ other edges, can be colored using two colors so that no edge is monochromatic. 
The algorithmic objective
is to efficiently find such a 2-coloring (or a satisfying assignment in the case of $k$-CNF).

This question was first addressed by Beck in his seminal paper \cite{Beck}, where he
gave an algorithm for the hypergraph 2-coloring problem with  
dependency degree $O(2^{k/48})$. More precisely, he gave a
polynomial-time deterministic algorithm to find a 2-coloring of the
vertices of every $k$-uniform hypergraph in which each edge shares
vertices with $O(2^{k/48})$ other edges such that no edge is
monochromatic. Molloy and Reed~\cite{MolloyReedBook} showed that the dependency degree of this algorithm can be improved
to $2^{k/16}/4$. In the same volume in which Beck's paper appeared, Alon~\cite{Alon} gave a randomized parallel
version of Beck's algorithm that outputs a valid 2-coloring when the dependency degree
is at most $2^{k/500}$ and showed that this algorithm can be derandomized\footnote{Alon did not attempt to optimize the exponent but Srinivasan~\cite{Srinivasan} states that optimizing the bound would still lead to an exponent with several tens in the denominator.}. 
Since then, tremendous progress has been made on randomized LLL algorithms.
Nonetheless, prior to our work, Beck's and Alon's algorithms remained the best
deterministic and parallel algorithms for the (symmetric) LLL. 
 
For randomized algorithms and algorithms that require $k$ to be a fixed constant,
a long line of work improved the maximum achievable dependency degree and the generality
of the results culminating in the work of Moser and Tardos~\cite{MoserTardos}. Moser and Tardos \cite{MoserTardos} provided
a simple randomized (parallel) algorithm for the general LLL.
These results are summarized in Table~\ref{table:relatedwork}, and we discuss them next.

\medskip

\begin{table}
\begin{tabular}{ | c || l | c | c | c | } \hline
    & Max. Dep. Deg. $d$ & Det. & Par. & Remark \\ \hline \hline
Beck~\cite{Beck}  			& \, \ \ \ \ \ \ \ \ $O(2^{k/48})$ 	&	X	&&\\ \hline
Beck~\cite{MolloyReedBook} 			&	\ \,\ \ \ \ \ \ \ $O(2^{k/16})$ 	&	X	&& prev. best det. algorithm\\ \hline
Alon~\cite{Alon}   			& \ \ \ \,\ \ \ \ \ $O(2^{k/500})$	&	X	& X & prev. best det. par. algorithm\\ \hline
\color{Gray} \asabove &\color{Gray} \ \ \,\ \ \ \ \ \ $O(2^{k/8})$		&\color{Gray} X	&\color{Gray} X &\color{Gray} only for constant k,d\\ \hline
Srinivasan~\cite{Srinivasan} &	\ \ \ \ \,\ \ \ \ $O(2^{k/4})$		&   && \\ \hline
\asabove   			& \ \ \ \ \ \,\ \ \ $O(2^{k/10.3})$	&	  & X &\\ \hline
Moser~\cite{Moser08A} 	  & \ \ \ \ \ \ \,\ \ $O(2^{k/2})$		&&&    \\ \hline
\color{Gray}\asabove   			&\color{Gray} \ \ \ \ \ \ \ \,\ $O(2^{k/2})$	  &\color{Gray} X	&&\color{Gray} only for constant k,d\\ \hline
Moser~\cite{Moser08B} 	  & \,\ \ \ \ \ \ \ \ $O(2^{k})$      &&&		\\ \hline
\color{Gray}\asabove   			&\color{Gray} \ \,\ \ \ \ \ \ \ $O(2^{k})$      &\color{Gray} X &&\color{Gray} only for constant k,d\\ \hline
Moser, Tardos~\cite{MoserTardos}&	\ \ \ \ \ \ \ \;\;\;\, $(2^{k}/e - 1)$	&&&	\\ \hline
\asabove   			& $(1-\eps) \cdot (2^k/e - 1)$ &   & X &\\ \hline
\color{Gray}\asabove   			&\color{Gray} $(1-\eps) \cdot (2^k/e  - 1)$ &\color{Gray} X &\color{Gray} X & \color{Gray} only for constant k,d\\ \hline
Our work		        &	\ \ \ \ \ \ \ \;\;\;\, $(2^{k/(1+\eps)}/e - 1)$     & X & X &\\ \hline
\end{tabular}
\caption{Maximum dependency degrees achieved for $k$-CNF formulas by previous randomized, deterministic and parallel algorithms. Algorithmic results in gray have to assume a constant $k$ and a constant dependency degree $d$ in order to be efficient.}\label{table:relatedwork}
\end{table}

Alon~\cite{Alon} gave an algorithm that is efficient for a dependency 
degree of $O(2^{k/8})$ if one assumes that $k$ and therefore also the dependency degree
is bounded above by a fixed constant. Molloy and Reed \cite{MolloyReed} generalized Alon's
method to give efficient algorithms for a certain set-system model for
applications of the symmetric form of the LLL. Czumaj and Scheideler
\cite{CzumajS,czumaj2000cnu} consider the algorithmic problem for the
asymmetric version of the LLL. The asymmetric version of the LLL
addresses the possibility of 2-coloring the vertices of non-uniform
hypergraphs with no monochromatic edges. The next improvement in increasing
the dependency degree threshold was due to Srinivasan \cite{Srinivasan}. He gave
a randomized algorithm for hypergraph 2-coloring when the dependency
degree is at most $2^{k/4}$. Moser \cite{Moser08A} improved the
dependency degree threshold to $O(2^{k/2})$ using a variant of Srinivasan's
algorithm. Later, Moser \cite{Moser08B} achieved a significant
breakthrough improving the dependency degree threshold to $2^{k-5}$
using a much simpler randomized algorithm. Moser and Tardos~\cite{MoserTardos}
closed the small constant-factor gap to the optimal dependency degree $2^k/e$
guaranteed by the general LLL.

More importantly, Moser--Tardos \cite{MoserTardos} gave an algorithmic
framework for the general version of the LLL (discussed in Section
\ref{sec:algframework}) that minimally restricts the abstract 
LLL setting to make it amenable for algorithmic considerations. In this
framework they gave an efficient randomized algorithm for computing the structures implied by
the LLL. The importance of the framework stems from the fact that it
captures most of the LLL applications, thus directly providing algorithms for
these applications. Moser~\cite{Moser08A,Moser08B} and
Moser--Tardos~\cite{MoserTardos} also gave a derandomization of their algorithms
obtaining an algorithm that runs in $m^{O((1/\eps)d \log d)}$ time,
where $d$ is the maximum dependency degree and $m$ is the number of events. For the simpler $k$-CNF problem,
the running time of the deterministic algorithms can be improved to $m^{O(k^2)}$.
Nonetheless, this running time is polynomial only under the strong condition
that $k$ and the dependency degree are bounded by a fixed constant. 

The main open question that remained open was to obtain deterministic
algorithms that go beyond the initial results of Beck \cite{Beck} and
that are efficient for unbounded dependency degrees. We address this
question by giving new derandomizations of the Moser--Tardos algorithm. We give a
derandomization that works efficiently for the general version of the
LLL in the aforementioned algorithmic framework of Moser--Tardos
\cite{MoserTardos} assuming a mild $\eps$-slack in the LLL
conditions. As a corollary, we obtain an algorithm that runs in time
$\tilde{O}(m^{2(1+(1/\eps))})$ to find a satisfying assignment for a
$k$-CNF formula with $m$ clauses such that no clause shares variables
with more than $2^{k/(1+\eps)}/e$ other clauses, for any $\eps>0$.
We note that our $\eps$-slack assumption is in the exponent as opposed to the multiplicative slackness in the Moser--Tardos results (see Table \ref{table:relatedwork}). 
We also extend the randomized parallel algorithm of Moser--Tardos to
obtain an efficient deterministic parallel algorithm under the same
assumption thereby improving over Alon's algorithm with a dependency degree
of $O(2^{k/500})$.

\medskip \noindent {\bfseries Organization.} In Section
\ref{sec:preliminaries}, we describe the algorithmic
framework of Moser--Tardos for the LLL and their algorithm. In Section
\ref{sec:ourresults}, we state our results and their implications for
the $k$-CNF problem. In Section \ref{sec:ourtechniques}, we give an informal 
description of the new ideas in the paper. In
Section \ref{sec:partial_witness}, we formally define the major
ingredient in our derandomization: the partial witness structure. In
Section \ref{sec:deterministic_algorithm}, we give our sequential
deterministic algorithm and analyze its running time. Finally, in
Section \ref{sec:par}, we present our parallel algorithm and its
running time analysis.

\section{Preliminaries} \label{sec:preliminaries}

\subsection{Algorithmic Framework}\label{sec:algframework}

To get an algorithmic handle on the LLL, we move away from the abstract probabilistic setting of the original LLL. We impose some restrictions on the representation and form of the probability space under consideration. In this paper we follow the algorithmic framework for the LLL due to Moser--Tardos \cite{MoserTardos}. We describe the framework in this section.

The probability space is given by a finite collection of mutually independent discrete random variables $\P=\{P_1,\ldots,P_n\}$. Let $D_i$ be the domain of $P_i$, which is assumed to be finite.
Every event in a finite collection of events $\A=\{A_1,\ldots,A_m\}$ is determined by a subset  of $\P$. We define the {\bf variable set} of an event $A \in \A$ as the unique minimal subset $S \subseteq \P$ that determines $A$ and denote it by $\vbl(A)$. 

The {\bf dependency graph} $G = G_{\A}$ of the collection of events $\A$ is a graph on vertex set $\A$. The graph $G_{\A}$ has an edge between events $A, B \in \A$, $A\neq B$ if $\vbl(A) \cap \vbl(B) \neq \emptyset$.  For $A \in \A$ we denote the neighborhood of $A$ in $G$ by $\Gamma(A) = \Gamma_{\A}(A)$ and define $\Gamma^+(A)=\Gamma(A)\cup \{A\}$. Note that events that do not share variables are independent. 

It is useful to think of $\A$ as a family of ``bad'' events. The objective is to find a point in the probability space, or equivalently, an evaluation of the random variables from their respective domains, for which none of the bad events happens. We call such an evaluation a {\bf good evaluation}.

Moser and Tardos \cite{MoserTardos} gave a constructive proof of the
general version of the LLL in this framework
(Theorem~\ref{thm:general-lll}) using Algorithm 1 presented in
the next section.  This framework captures most known applications
of the LLL. % except for the Lopsided LLL.

\subsection{The Moser--Tardos Algorithm}\label{sec:MT}

Moser-Tardos \cite{MoserTardos} presented the very simple Algorithm 1 to find a good evaluation. 

%\newpage

\begin{figure}[!h]
\medskip \noindent {\bfseries Algorithm 1:\\ Sequential Moser--Tardos Algorithm}
\begin{enumerate}
\item For every $P\in \mathcal{P}$, $v_{P}\leftarrow$ a random evaluation of ${P}$.
\item While $\exists A\in \mathcal{A}$ such that $A$ happens on the
current evaluation $(P=v_{P}:\forall P\in \mathcal{P})$, do
\begin{enumerate}
\item Pick one such $A$ that happens (any arbitrary choice would work).
\item Resample $A$: For all $P\in \vbl(A)$, do
\begin{itemize}
\item $v_P\leftarrow$ a new random evaluation of $P$.
\end{itemize}
\end{enumerate}
\item Return $(v_P)_{P\in\mathcal{P}}$.
\end{enumerate}
\end{figure}

Observe that if the algorithm terminates, then it outputs a good evaluation. The following theorem from \cite{MoserTardos} shows that the algorithm is efficient if the LLL-conditions are met.

\begin{theorem}\cite{MoserTardos} \label{thm:general-lll}
Let $\A$ be a collection of events as defined in the algorithmic framework defined in Section \ref{sec:algframework}. If there exists an assignment of reals $x:\mathcal{A} \rightarrow (0,1)$ such that for all $A \in \A$,
$$\prob{A} \leq x'(A) := x(A) \prod_{B \in \Gamma(A)}(1-x(B)),$$
%then Algorithm 1 outputs a good evaluation by performing $O(\sum_{A \in \A} \frac{x(A)}{1-x(A)})$ resamplings in expectation.
then the expected number of resamplings done by Algorithm 1 is at most $\sum_{A \in \A} \left(x(A)/(1-x(A))\right)$.
\end{theorem}

\section{Results} \label{sec:ourresults}

This section formally states the new results established in this paper. 

If an assignment of reals as stated in Theorem~\ref{thm:general-lll} exists, then we use such an assignment to define the following parameters\footnote{Throughout this paper $\log$ denotes the logarithm to base $2$.}:
\begin{itemize}
	\item $x'(A) :=  x(A) \prod_{B \in \Gamma(A)}(1-x(B))$.
	\item $D := \max_{P_i\in \mathcal{P}}\{|D_i|\}$.
	\item $\displaystyle M :=\max\left\{n,4m,2\sum_{A \in {\A}} \dfrac{2 |\vbl(A)|}{x'(A)}\cdot\dfrac{x(A)}{1-x(A)},\: \max_{A \in {\A}} \dfrac{1}{x'(A)}\right\}$.
	\item $w_{min}:=\min_{A\in\mathcal{A}}\{-\log{x'(A)}\}$.
	\item $\gamma = \frac{\log M}{\eps}$.
\end{itemize}

For the rest of this paper, we will use these parameters to express the running time of our algorithms. 

Our sequential deterministic algorithm assumes that for every event $A \in \A$, the conditional probability of occurrence of $A$ under any partial assignment to the variables in $\vbl(A)$, can be computed efficiently. This is the same complexity assumption as used in Moser--Tardos \cite{MoserTardos}. It can be 
further weakened to use pessimistic estimators. 

\begin{theorem} \label{thm:general_deterministic}
Let the time needed to compute the conditional probability $\prob{A
\:|\: \forall i \in I \: :\: P_i=v_i}$ for any $A \in \A$ and any
partial evaluation $(v_i \in D_i)_{i \in I}$, $I \subseteq [n]$, be at
most $t_C$. Suppose there is an $\eps \in (0,1)$ and an assignment of
reals $x:\mathcal{A}\rightarrow (0,1)$ such that for all $A \in \A$,
$$
\prob{A}\leq x'(A)^{1+\eps}=\left(x(A) \prod_{B \in
\Gamma(A)}(1-x(B))\right)^{1+\eps}.
$$

Then there is a deterministic algorithm that finds a good evaluation in time 
$$O\left(t_C \cdot \frac{DM^{3+2/\eps}\log^2{M}}{\eps^2 w_{min}^2}\right),$$
where the parameters $D$, $M$ and $w_{min}$ are as defined above.
\end{theorem}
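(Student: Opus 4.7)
The plan is to derandomize the Moser--Tardos algorithm via the method of conditional probabilities, with pessimistic estimator equal to the sum of occurrence probabilities of a suitably defined class of ``critical'' witness trees. The slack $\eps$ in the hypothesis is used both to drive the estimator below $1$ (for correctness) and to keep the enumeration of critical trees polynomially bounded (for efficiency).

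Recall from Moser--Tardos that each resampling during a random execution can be charged to a unique witness tree $\tau$ (rooted at the resampled event, with nodes labeled by events in $\A$), and the probability that a fixed $\tau$ appears in the execution log is at most $\prod_{v\in\tau}\prob{[v]}$. The strengthened hypothesis $\prob{A}\le x'(A)^{1+\eps}$ upgrades this to $\prod_{v\in\tau}x'([v])^{1+\eps}$, giving an extra slack factor $\exp(-\eps\,W(\tau))$ where $W(\tau):=\sum_{v\in\tau}(-\log x'([v]))$. I declare $\tau$ to be \emph{critical} if $W(\tau)\ge s^*$ for a weight threshold $s^*=\Theta((\log M)/\eps)$, tuned so that (i) the Moser--Tardos generating-function bound $\sum_{\tau\text{ rooted at }A} x'(\tau)\le x(A)/(1-x(A))$, multiplied through by the slack factor $\exp(-\eps s^*)\le 1/M$, makes the total occurrence probability of critical trees at most $1/4$, and (ii) restricting to \emph{minimal} critical trees (ones in which $W$ just crosses $s^*$), there are at most $M^{O(1/\eps)}$ of them rooted at any given event, so the collection can be enumerated.

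The algorithm is then the standard conditional-expectations routine applied to
$$\Phi \;:=\; \sum_{\tau \text{ minimal critical}} \prob{\tau \text{ appears in the MT log} \;\mid\; \text{partial evaluation so far}}.$$
Fix all random choices one at a time --- first the initial assignment of $P_1,\ldots,P_n$, then during each Resample$(A)$ call the variables of $\vbl(A)$: at each variable try all $D$ values in its domain and commit to one that does not increase $\Phi$, which exists by the tower property. Hence $\Phi<1$ is maintained throughout, so no minimal critical tree is ever produced; by minimality this implies no critical tree appears at all, and every witness tree arising from the execution has weight less than $s^*$. A standard Moser--Tardos counting on these light trees, together with the definition of $M$, then bounds the total number of resamplings by $\tilde{O}(M)$.

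The main obstacle is efficient evaluation of $\Phi$. For any fixed tree $\tau$, the factor $\prob{\tau \text{ appears} \mid \text{partial evaluation}}$ factors across the vertices of $\tau$ into a product of single-event conditional probabilities, each computable in time $t_C$ by hypothesis. The delicate part is enumerating minimal critical trees in time $M^{O(1/\eps)}$ per root: one organizes the enumeration as a weighted branching search in the dependency graph $G_\A$, expanded only until the accumulated weight first crosses $s^*$, so that the tree size is at most $s^*/w_{min}=O((\log M)/(\eps w_{min}))$ and the branching factor is controlled by the $x'$-weighted neighborhoods rather than the raw degrees of $G_\A$. Arguing that a weight-based cut-off (rather than a crude depth-based one) actually keeps the search polynomial while still allowing the union-bound argument for $\Phi$ to go through is the heart of the proof; combining this per-step cost of evaluating $\Phi$ with the $D$ trials per variable and the $\tilde{O}(M)$ total variable fixings will yield the claimed runtime.
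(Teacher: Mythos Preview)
Your high-level plan---method of conditional probabilities, pessimistic estimator over weight-thresholded witness trees, and using the $\eps$-slack both to make the estimator $<1$ and to bound the enumeration---is exactly the paper's plan. But the argument as written has a genuine gap at the step ``by minimality this implies no critical tree appears at all,'' and this gap is precisely what the paper's main structural innovation (partial witness trees) is for.

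The problem is that neither ``occurs in the log $C_T$'' nor ``passes the $T$-check'' is closed under removing a leaf from an ordinary Moser--Tardos witness tree. If $\tau=\tau_{C_T}(t)$ and you delete a leaf $u$, the result is in general not $\tau_{C_T}(t')$ for any $t'$; and for the $T$-check, deleting $u$ shifts the table columns used by every shallower vertex that shares a variable with $u$, so the truncated tree need not pass. Hence ``some critical tree occurs $\Rightarrow$ some \emph{minimal} critical tree occurs'' is unjustified. The only reduction that is immediately valid is removing the \emph{root}: then each child subtree does occur at an earlier time. But a root can have up to $d=|\Gamma^+(A)|$ children, so this only yields a range $[s^*, (d{+}1)s^*]$ (the original Moser--Tardos range lemma, with size replaced by weight), and the number of witnesses in that range is $m^{\Omega(d)}$, not $M^{O(1/\eps)}$.

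The paper fixes this by replacing the root's label with a \emph{subset} of $\vbl(A)$, drawn from a fixed binary tree $\mathbb{B}_A$ of such subsets. This lets one split the root into two pieces (rather than $d$), so the range lemma tightens to $[\gamma,2\gamma]$ (Lemma~\ref{lemma:range}), while the number of root labels grows only by a $|\mathbb{B}_A|\le 2|\vbl(A)|$ factor, already absorbed into $M$. Two further points you are missing: (i) events with $x'(A)<\tfrac{1}{4m}$ must be handled separately (the paper's $F_1$), since a single such event already has weight $>\gamma$ and the range argument does not apply to it; (ii) the enumeration bound is obtained much more simply than your ``weighted branching search''---since every forbidden witness in $F_2$ has weight $\le 2\gamma$, its $T$-check probability is at least $2^{-2\gamma(1+\eps)}$, and as the sum of these probabilities is $\le\tfrac14$, the count is at most $\tfrac14\,2^{2\gamma(1+\eps)}=\tfrac14 M^{2(1+1/\eps)}$ (Lemma~\ref{lemma:enumerate}). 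Your per-step cost analysis is otherwise on the right track.
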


%\noindent {\bf Remarks.} 
We make a few remarks to give a perspective for the magnitudes of the parameters involved in our running time bound. As a guideline to reading the
results, it is convenient to think of $M$ as $\tilde{O}(m+n)$ and of $w_{min}$ as $\Omega(1)$. 

Indeed, $w_{min} = \Omega(1)$ holds whenever the $x(A)$'s are bounded away from one by a constant. For this setting we also have, without loss of generality\footnote{With $x(A)$ being
bounded away from one and given the $\eps$-slack assumed in our
theorems one can always reduce $\eps$ slightly to obtain a small
constant factor gap between $x'(A)$ and $\prob{A}$ (similar to
\cite{MoserTardos}) and then increase any extremely small $x(A)$ to at
least ${c}/{m}$ for some small constant $c > 0$. Increasing $x(A)$ in this way
only weakens the LLL-condition for the event $A$ itself.
Furthermore, the effect on the LLL-condition for 
any event due to the changed $(1-x(B))$ factors of one of its (at most
$m$) neighboring events $B$ accumulates to at most $(1-c/m)^m$
which can be made larger than the produced gap between $x'(A)$ and $\prob{A}$.}, 
that $x(A) =  \Omega(m^{-1})$. Lastly, the factor $(\prod_{B \in \Gamma(A)} (1 - x(B)))^{-1}$ is usually small;
e.g., in all applications using the symmetric LLL or the simple asymmetric version\cite{MolloyReed,MolloyReedBook} this factor is a constant. This makes $M$ at most a polynomial in $m$ and $n$. For most applications of the LLL this also makes $M$ polynomial in the size of the input/output. For all these settings our algorithms are efficient: The running time bound of our sequential algorithm is polynomial in $M$ and that of our parallel algorithm is polylogarithmic in $M$ using at most $M^{O(1)}$ many processors. 

Notable exceptions in which $M$ is not polynomial in the input size are the
problems in \cite{expLLL2}. For these problems $M$ is still $\tilde{O}(m+n)$
but the number of events $m$ is exponential in the number of variables $n$ and
the input/output size. For these settings, the problem of checking whether a given evaluation is good
is coNP-complete and obtaining a derandomized algorithm is an open question.

It is illuminating to look at the special case of $k$-CNF both in the
statements of our theorems as well as in the proofs, as many of the
technicalities disappear while retaining the essential ideas.  For
this reason, we state our results also for $k$-CNF. The magnitudes of the above parameters in the $k$-CNF applications are given by $x'(A)>1/de$, $D=2$, $M=\tilde{O}(m)$, and $w_{min}\approx k$.

\begin{corollary} \label{thm:kCNF_deterministic}
For any $\eps\in (0,1)$ there is a deterministic algorithm that finds a satisfying assignment 
to any $k$-CNF formula with $m$ clauses in which each clause shares variables 
with at most $2^{k/(1+\eps)}/e-1$ other clauses in time $\tilde{O}(m^{3+2/\eps})$.
\end{corollary}  

%For the general setting, the deterministic algorithm works under a fairly general complexity assumption regarding the events, which is valid in many applications; for example, our algorithm applies to $k$-CNF and hypergraph 2-coloring. 

\iffalse
This improves upon the deterministic algorithms of Moser
\cite{Moser08B} and of Moser--Tardos \cite{MoserTardos} (specialized
to $k$-CNF) with running time $m^{\Omega(k^2)}$, which is
superpolynomial for $k = \omega(1)$. Our algorithm works in polynomial
time under a far wider range of parameters than previous algorithms.
The highest dependency degree threshold that could be tolerated by an
efficient deterministic algorithm for unbounded dependency degree
appears in \cite{MolloyReedBook} and is based on Beck's algorithm
\cite{Beck}. Here, a polynomial time deterministic algorithm is given
for 2-coloring $k$-uniform hypergraphs in which no edge intersects
more than $2^{k/16-2}$ other edges. Our theorem above applies equally
well to the 2-coloring problem to state that there exists a
deterministic algorithm that runs in time
$\tilde{O}(m^{2(1+(1/\eps))})$ to give a 2-coloring of a $k$-uniform
hypergraph with $m$ edges in which no edge intersects more than
$2^{k/(1+\eps)}/e$ other edges. Thus, while the previous algorithms
for hypergraph 2-coloring run efficiently only in the case when
$\eps>15$, our algorithm is efficient for any $\eps>0$.  
\fi

We also give a parallel deterministic algorithm. This algorithm
makes a different complexity assumption about the events, namely, that their
decision tree complexity is small. This assumption is quite general
and includes almost all applications of the LLL (except again for the 
problems mentioned in \cite{expLLL2}). They are an interesting alternative
to the assumption that conditional probabilities can be computed efficiently as used in the sequential algorithm. 

\begin{theorem}\label{thm:parallel_alg}
For a given evaluation, let the time taken by $M^{O(1)}$ processors to check the truth of an event $A\in \A$ be at most $t_{eval}$. Let $t_{MIS}$ be the time to compute the maximal independent set in an $m$-vertex graph using $M^{O(1)}$ parallel processors on an EREW PRAM. Suppose, there is an $\eps \in (0,1)$ and an assignment of reals $x:\mathcal{A}\rightarrow (0,1)$ such that for all $A \in \A$,
$$\prob{A}\leq x'(A)^{1+\eps}=\left(x(A) \prod_{B \in \Gamma(A)}(1-x(B))\right)^{1+\eps}.$$
If there exists a constant $c$ such that every event $A \in \A$ has {\bf decision tree complexity}\footnote{Informally,
we say that a function $f(x_1,\ldots,x_n)$ has decision tree complexity at most $k$ if we can determine its value by adaptively querying at most $k$ of the $n$ input variables.} at most $c\min \{-\log x'(A),\log{M}\}$, then there is a parallel algorithm that finds a good evaluation in time 
\[
O\left(\frac{\log{M}}{\eps w_{min}} (t_{MIS} +t_{eval}) + \gamma \log{D} \right)
\]
using $M^{O(({c}/{\eps})\log D)}$ processors.
\end{theorem}

The fastest known algorithm for computing the maximal independent set in an $m$-vertex graph using $M^{O(1)}$ parallel processors on an EREW PRAM runs in time $t_{MIS}=O(\log^2 m)$ \cite{alon-babai-itai, Luby}. Using this in the theorem, we get the following corollary for $k$-CNF.

%Using this theorem and the fastest known algorithm for computing the maximal independent set in time $t_{MIS} = O(\log^2 m)$ due to Luby~\cite{Luby} we get the following corollary for $k$-CNF.

\begin{corollary} \label{thm:kCNF_deterministic_parallel} 
For any $\eps\in (0,1)$ there is a deterministic parallel algorithm that uses $m^{O(1/\eps)}$ processors on an EREW PRAM
and finds a satisfying assignment to any $k$-CNF formula with $m$ clauses in which each clause shares variables 
with at most $2^{k/(1+\eps)}/e$ other clauses in time $O(\log^3{m}/\eps)$.
\end{corollary}

\section{Techniques}\label{sec:ourtechniques}
In this section, we informally describe the main ideas of our approach in the special context of $k$-CNF formulas and indicate how they generalize. Reading this section is not essential but provides intuition behind the techniques used for developing deterministic algorithms for the general LLL. For the sake of exposition in this section, we omit numerical constants in some mathematical expressions. Familiarity with the Moser--Tardos paper~\cite{MoserTardos} is useful but not necessary for this section.  

\subsection{The Moser--Tardos Derandomization}\label{sec:mtderand}

Let $F$ be a $k$-CNF formula with $m$ clauses.  We note immediately
that if $k > 1 + \log{m}$, then the probability that a random
assignment does not satisfy a clause is $2^{-k} \leq 1/(2m)$.  Thus
the probability that on a random assignment, $F$ has an unsatisfied
clause is at most $1/2$, and hence a satisfying assignment can be
found in polynomial time using the method of conditional probabilities
(see, e.g., \cite{MolloyReedBook}). Henceforth, we assume that $k \leq
1 + \log{m}$.  We also assume that each clause in $F$ shares variables
with at most $2^k/e-1$ other clauses; thus the LLL guarantees the
existence of a satisfying assignment. 

To explain our techniques we first need to outline the deterministic
algorithms of Moser and of Moser--Tardos which work in polynomial
time, albeit only for $k=O(1)$. Consider a table $T$ of values: for
each variable in $\P$ the table has a sequence of values, each picked
at random according to its distribution. We can run Algorithm 1 using
such a table: instead of randomly sampling afresh each time a new
evaluation for a variable is needed, we pick its next unused value
from $T$. The fact that the randomized algorithm terminates quickly in
expectation (Theorem~\ref{thm:general-lll}), implies that there exist
small tables (i.e., small lists for each variable) on which the
algorithm terminates with a satisfying assignment. The deterministic
algorithm finds one such table.

The constraints to be satisfied by such a table can be described in terms
of \emph{witness trees}: for a run of the randomized algorithm,
whenever an event is resampled, a witness tree ``records'' the
sequence of resamplings that led to the current resampling. We will
not define witness trees formally here; see \cite{MoserTardos} or
Section~\ref{sec:partial_witness} for a formal definition.  We say
that a witness (we will often just use ``witness'' instead of
``witness tree'') is \emph{consistent} with a table, if this witness
arises when the table is used to run Algorithm 1.  If the algorithm
using a table $T$ does not terminate after a small number of
resamplings, then it has a large consistent witness certifying this
fact. Thus if we use a table which has no large consistent witness,
the algorithm should terminate quickly.

The deterministic algorithms of Moser and of Moser--Tardos compute a
list $L$ of witness trees satisfying the following properties.
\begin{enumerate}  
\item Consider an arbitrary but fixed table $T$. If no witness in $L$ is consistent 
with $T$, then there is no large witness tree consistent with $T$. 
\item The expected number of witnesses in $L$ consistent with a random
table is less than $1$.  This property is needed in order to apply the
method of conditional probabilities to find a small table with which
no tree in $L$ is consistent.
\item The list $L$ is of polynomial size. This property is necessary
for the method of conditional probabilities to be efficient.
\end{enumerate}

We now describe how these properties arise naturally while using Algorithm
1 and how to find the list $L$. In the context of $k$-CNF formulas
with $m$ clauses satisfying the degree bound, Moser (and also
Moser--Tardos when their general algorithm is interpreted for $k$-CNF)
prove two lemmas that they use for derandomization.  The
\emph{expectation lemma} states that the expected number of large
(size at least $\log m$) consistent witness trees (among all possible
witness trees) is less than $1/2$ (here randomness is over the choice
of the table).  At this point we could try to use the method of
conditional probabilities to find a table such that there are no large
witness trees consistent with it.  However there are infinitely many
witness trees, and so it is not clear how to proceed by this method.

This difficulty is resolved by the \emph{range lemma} which states
that if for some $u$, no witness tree with size in the range $[u, ku]$
is consistent with a table, then no witness tree of size at least $u$
is consistent with the table. Thus, the list $L$ is the set of witness
trees of size in the range $[u, ku]$. Now one can find the required
table by using the method of conditional probabilities to exclude all
tables with a consistent witness in $L$.  The number of witnesses in
$L$ is $m^{\Omega(k^2)}$.  To proceed by the method of conditional
probabilities we need to explicitly maintain $L$ and find values
for the entries in the table so that none of the witnesses in $L$
remains consistent with it.

Thus, the algorithm of Moser (and respectively Moser--Tardos) works in
polynomial time only for constant $k$. Clearly, it is the size of $L$
that is the bottleneck towards achieving polynomial running time for
$k=\omega(1)$. One possible way to deal with the large size of $L$
would be to maintain $L$ in an implicit manner, thereby using a small
amount of space. We do not know how to achieve this. We solve this
problem in a different way, by working with a new (though closely
related) notion of witness trees, which we explain next.
%\nnote{The last three lines had been removed, I put them back in. If you don't like it let's discuss it.}

\subsection{Partial Witness Trees}

For a run of the Moser--Tardos randomized algorithm using a table $T$,
for each resampling of an event, we get one witness tree consistent
with $T$.  Given a consistent witness tree of size $ku+1$, removing
the root gives rise to up to $k$ new consistent witnesses, whose union
is the original witness minus the root.  Clearly one of these new
subtrees has size at least $u$.  This proves their range lemma. The
range lemma is optimal for the witness trees. That is, for a given $u$
it is not possible to reduce the multiplicative factor of $k$ between
the two endpoints of the range $[u, ku]$.

We overcome this limitation by introducing \emph{partial witness
trees}, which have properties similar to those of witness trees, but
have the additional advantage of allowing a tighter range lemma.  The
only difference between witness trees and partial witness trees is
that the root, instead of being labeled by a clause $C$ (as is the
case for witness trees), is labeled by a \emph{subset} of variables
from $C$.  Now, instead of removing the root to construct new witness
trees as in the proof of the Moser--Tardos range lemma, each subset of
the set labeling the root gives a new consistent partial witness tree.
This flexibility allows us to prove the range lemma for the smaller
range $[u, 2u]$.  The number of partial witness trees is larger than
the number of witness trees because there are $2^km$ choices for the
label of the root (as opposed to $m$ choices in the case of witness
trees) since the root may be labeled by any subset of variables in a
clause.  But $2^k \leq 2m$, as explained at the beginning of
Section 4.1.  Thus for each witness tree there are at most $2^k \leq
2m$ partial witnesses and the expectation lemma holds with similar
parameters for partial witnesses as well.  The method of conditional
probabilities now needs to handle partial witness trees of size in
the range $[\log{m}, 2 \log{m}]$, which is the new $L$.  The number of
partial witnesses in this range is $m^{\Omega(k)}$, which is still too
large. The next ingredient brings this number down to a manageable
size.

\subsection{$\eps$-slack}
By introducing an $\eps$-slack, that is, by making the slightly stronger
assumption that each clause intersects at most $2^{(1-\eps)k}/e$ other
clauses, we can prove a stronger expectation lemma: The expected
number of partial witnesses of size more than $(4\log{m})/\eps k$ is
less than $1/2$. Indeed, the number of labeled trees of size $u$ and
degree at most $d$ is less than $(ed)^u\leq 2^{(1-\eps)ku}$ (see
\cite{knuth1969countingtrees}). Thus the number of partial witnesses
of size $u$ is less than $2^k m 2^{(1-\eps)ku}$, where the factor $2^k
m$ ($\leq 2m^2$) accounts for the number of possible labels for the
root. Moreover, the probability that a given partial witness tree of
size $u$ is consistent with a random table is $2^{-k(u-1)}$ (as opposed
to $2^{-ku}$ in the case of a witness tree). This is proved in a
similar manner as for witness trees. Thus the expected number of
partial witnesses of size at least $\gamma=4\log{m}/\eps k$ consistent
with a random table is at most
\[
\sum_{u\geq \gamma} 2^k m 2^{(1-\eps)ku} \cdot 2^{-k(u-1)} \leq \sum_{u\geq \gamma} 2^{2k} m 2^{-\eps ku} \leq \sum_{u\geq \gamma} 4m^3 2^{-\eps ku} \leq 1/2.
\]

Now, by the new expectation and range lemmas it is sufficient to consider
partial witnesses of size in the range $[(4\log{m})/\eps k, (8\log{m})/\eps
k]$.  The number of partial witnesses of size in this range is polynomial in
$m$; thus the list $L$ of trees that the method of conditional
probabilities needs to maintain is polynomial in size.

\subsection{General Version}
More effort is needed to obtain a deterministic algorithm for the general version of the LLL. Here, the events are allowed to have significantly varying probabilities of occurrence and unrestricted structure.

One issue is that an event could possibly depend on all $n$
variables. In that case, taking all variable subsets of a label for
the root of a partial witness would give up to $2^n$ different
possible labels for the roots. However, for the range lemma to hold
true, we do not need to consider all possible variable subsets for the
root; instead, for each root event $A$ it is sufficient to have a
pre-selected choice of $2\vbl(A)$ labels. This pre-selected choice of
labels $\mathbb{B}_A$ is fixed for each event $A$ in the beginning.

The major difficulty in derandomizing the general LLL is in finding a list $L$ satisfying the three properties mentioned earlier for applying the method of conditional probabilities. The range lemma can still be applied. However, the existence of low probability events with (potentially) many neighbors may lead to as many as $O(m^u)$ partial witnesses of size in the range $[u,2u]$. Indeed, it can be shown that there are instances in which there is no setting of $u$ such that the list $L$ containing all witnesses of size in the range $[u,2u]$ satisfies properties (2) and (3) mentioned in Section 4.1.

The most important ingredient for working around this in the general
setting is the notion of \emph{weight of a witness tree}. The weight
of a tree is the sum of the weights of individual vertices; more
weight is given to those vertices whose corresponding bad events have
smaller probability of occurrence. Our deterministic algorithm for the
general version finds a list $L$ that consists of partial witnesses
with weight (as opposed to size) in the range $[\gamma, 2\gamma]$,
where $\gamma$ is a number depending on the problem. It is easy to
prove a similar range lemma for weight based partial witnesses which
guarantees property (1) for this list. Further, the value of $\gamma$
can be chosen so that the expectation lemma of Moser and Tardos can be
adjusted to lead to property (2) for $L$. Unfortunately one cannot
prove property (3) by counting the number of partial witnesses using
combinatorial enumeration methods as in \cite{Moser08B}. This is due
to the possibility of up to $O(m)$ neighbors for each event $A$ in the
dependency graph. However, the strong coupling between weight and
probability of occurrence of bad events can be used to obtain property
(3) directly from the expectation lemma.

\subsection{Parallel Algorithm} \label{subsec:parallel}
For the parallel algorithm, we use the technique of
limited-independence spaces, or more specifically $k$-wise
$\delta$-dependent probability spaces due to Naor and
Naor\cite{NaorNaor} and its extensions \cite{Evenetal, Charietal}.
This is a well-known technique for derandomization.  The basic idea
here is that instead of using perfectly random bits in the randomized
algorithm, we use random bits chosen from a limited-independence
probability space.  For many algorithms it turns out that their
performance does not degrade when using bits from such a probability
space; but now the advantage is that these probability spaces are much
smaller in size and so one can enumerate all the sample points in them
and choose a good one, thereby obtaining a deterministic algorithm.
This tool was applied by Alon~\cite{Alon} to give a deterministic
parallel algorithm for $k$-uniform hypergraph 2-coloring and other
applications of the LLL, but with much worse parameters than ours. Our
application of this tool is quite different from the way Alon uses it:
Alon starts with a random 2-coloring of the hypergraph chosen from a
small size limited independence space; he then shows that at least one
of the sample points in this space has the property that the
monochromatic hyperedges and almost monochromatic hyperedges form
small connected components.  For such a coloring, one can alter it
locally over vertices in each component to get a valid 2-coloring.

In contrast, our algorithm is very simple (we describe it for $k$-CNF;
the arguments are very similar for hypergraph 2-coloring and for the
general LLL): recall that for a random table, the expected number of
consistent partial witnesses with size in the range $[(4\log{m})/\eps
k, (8\log{m})/\eps k]$ is at most $1/2$ (for the case of $k$-CNF).
Each of these partial witnesses uses at most $((8 \log{m})/{\eps k})
\cdot k=((8 \log{m})/{\eps})$ entries from the table.  Now, instead of using a completely
random table, we use a table chosen according to a $\left(8
\log{m}/\eps\right)$-wise independent distribution (i.e., any subset
of at most $(8 \log{m})/\eps$ entries has the same joint distribution
as in a random table).  So any partial witness tree is consistent with
the new random table with the same probability as before.  And hence
the expected number of partial witnesses consistent with the new
random table is still at most $1/2$.  But now the key point to note is
that the number of tables in the new limited independence distribution
is much smaller and we can try each of them in parallel until we
succeed with one of the tables.  To make the probability space even
smaller we use $k$-wise $\delta$-dependent distributions, but the idea
remains the same.  Finally, to determine whether a table has no
consistent partial witness whose size is at least $(4\log{m})/\eps k$,
we run the parallel algorithm of Moser--Tardos on the table.

In order to apply the above strategy to the general version, we require that the number of variables on which witnesses depend be small, and hence the number of variables on which    
events depend should also be small.  In our general parallel algorithm we relax this to some extent: instead of requiring that each event 
depend on few variables, we only require that the decision tree complexity of the event be small.  The idea behind the proof remains the same.

\section{The Partial Witness Structure}\label{sec:partial_witness}

In this section we define the partial witness structure and their weight.
We then prove the new range lemma using these weights. 

\subsection{Definitions}

For every $A \in \A$ we fix an arbitrary rooted {\bf binary variable
splitting} $\mathbb{B}_A$. It is a binary tree in which all vertices
have labels which are nonempty subsets of $\vbl(A)$: The root of
$\mathbb{B}_A$ is labeled by $\vbl(A)$ itself, the leaves are labeled
by distinct singleton subsets of $\vbl(A)$ and every non-leaf vertex
in $\mathbb{B}_A$ is labeled by the disjoint union of the labels of
its two children. This means that every non-root non-leaf vertex is
labeled by a set $\{v_{i_1},\ldots,v_{i_k}\}$, $k\geq 2$ while its
children are labeled by $\{v_{i_1},\ldots,v_{i_{j}}\}$ and
$\{v_{i_{j+1}},\ldots,v_{i_k}\}$ for some $1\leq j\leq k-1$. Note that
$\mathbb{B}_A$ consists of $2|\vbl(A)|-1$ vertices. We abuse the
notation $\mathbb{B}_A$ to also denote the set of labels of the
vertices of this binary variable splitting. The binary variable
splitting is not to be confused with the (partial) witness tree which
we define next. The elements from $\mathbb{B}_A$ will solely be used
to define the possible labels for the roots of partial witness trees.
An example of a binary variable splitting $\mathbb{B}_A$ can be found in Figure \ref{fig:binarysplitting}.

\begin{figure}[!ht]
\centering
\includegraphics[scale=0.7]{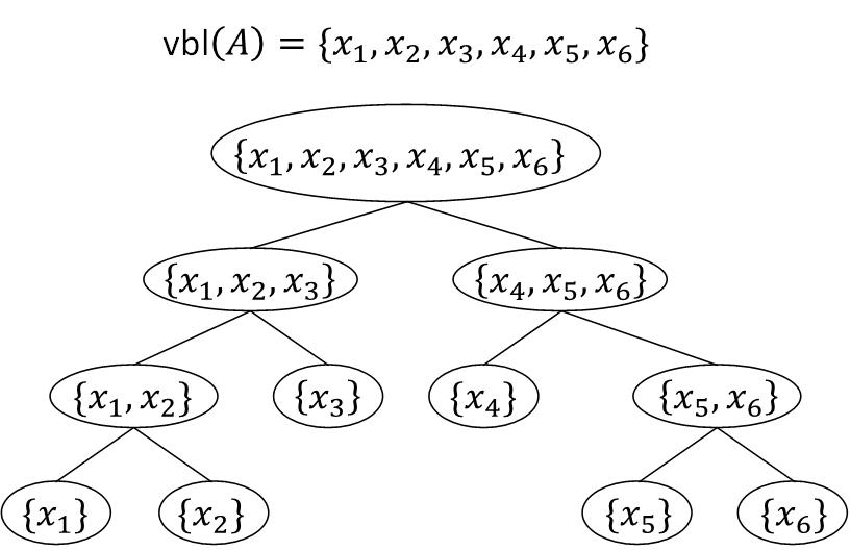}
\caption{A binary variable splitting for an event $A$ that depends on variables $x_1,x_2,x_3,x_4,x_5,x_6$.}
\label{fig:binarysplitting}
\end{figure}

A {\bf partial witness tree $\tau_S$} is a finite rooted tree whose
vertices apart from the root are labeled by events from $\A$ while the
root is labeled by some subset $S$ of variables with $S \in
\mathbb{B}_R$ for some $R \in \A$. Each child of the root must be
labeled by an event $A$ that depends on at least one variable in $S$
(thus a neighbor of the corresponding root event); the children of
every other vertex that is labeled by an event $B$ must be labeled
either by
$B$ or by a neighboring event of $B$, i.e., its label should be from
$\Gamma^+(B)$. Define
$V(\tau_S)$ to be the set of vertices of $\tau_S$. For notational
convenience, we use
$\overline{V}(\tau_S):=V(\tau_S)\setminus\{\mathrm{Root}(\tau_S)\}$
and denote the label of a vertex $v \in \overline{V}(\tau_S)$ by
$[v]$.

A {\bf full witness tree} is a special case of a partial witness where
the root is the complete set $\vbl(A)$ for some $A\in \mathcal{A}$. In
such a case, we relabel the root with $A$ instead of $\vbl(A)$. Note
that this definition of a full witness tree is the same as the one of
the witness trees in \cite{MoserTardos}.

Define the weight of an event $A\in \A$ to be
$w(A)=-\log{x'(A)}$. Define the {\bf weight of a partial witness tree}
$\tau_{S}$ as the sum of the weights of the labels of the vertices in
$\overline{V}(\tau_S)$, i.e.,
\[
w({\tau_S}):= \sum_{v\in \overline{V}(\tau_S)}w([v])=-\log{\left(\prod_{v\in \overline{V}(\tau_S)} x'([v]) \right)}.
\]

The {\bf depth} of a vertex $v$ in a witness tree is the distance of $v$ from the root in the witness tree. We say that a partial witness tree is {\bf proper} if for every vertex $v$, all children of $v$ have distinct labels. 

Similar to \cite{Moser08A}, we will control the randomness used by the
algorithm using a {\bf table} of evaluations, denoted by $T$. It is
convenient to
think of $T$ as a matrix.  This table contains one row for each
variable in $\P$. Each row contains evaluations for its variable. Note
that the number of columns in the table could possibly be infinite. In
order to use such a table in the algorithm, we maintain a pointer
$t_i$ for each variable $P_i\in \P$ indicating the column containing
its current value used in the evaluation of the events. We denote the
value of $P_i$ at $t_i$ by $T(i,t_i)$. If we want to resample an
evaluation for $P_i$, we increment the pointer $t_i$ by one, and use
the value at the new location.

We call a table $T$ a {\bf random table} if, for all variables $P_i\in \P$ and all positions $j$, the entry $T(i,j)$ is picked independently at random according to the distribution of $P_i$. It is clear that running Algorithm 1 is equivalent to using a random table to run Algorithm 2 below. 

%\newpage

\begin{figure}[!ht] \label{alg:2}

\medskip\noindent{\bf Algorithm 2:\\ Moser-Tardos Algorithm with input table\\} 

\noindent Input: Table $T$ with values for variables \\
Output: An assignment of values for variables so that none of the events in $\mathcal{A}$ happens 
\begin{enumerate}
\item For every variable $P_i \in \P$: Initialize the pointer $t_{i} = 1$.
\item While $\exists A\in \mathcal{A}$ that happens on the current
assignment (i.e., $\forall P_i \in \P: P_i = T(i,t_{i})$) do
\begin{enumerate}
\item Pick one such $A$.
\item Resample $A$: For all $P_i \in \vbl(A)$ increment $t_{i}$ by one.
\end{enumerate}
\item Return $ \forall P_i \in \P : P_i = T(i,t_{i})$.
\end{enumerate}
\end{figure}

In the above algorithm, Step 2(a) is performed by a fixed arbitrary deterministic procedure. This makes the algorithm well-defined. 

Let $C:\mathbb{N}\rightarrow \mathcal{A}$ be an ordering of the events (with repetitions), which we call the {\bf event-log}. Let the ordering of the events as they have been selected for resampling in the execution of Algorithm 2 using a table $T$ be denoted by an event-log $C_T$. Observe that $C_T$ is partial if the algorithm terminates after a finite number of resamplings $t$; i.e., $C_T(i)$ is defined only for $i\in \{1,2,\ldots,t\}$.

Given an event-log $C$, associate with each resampling step $t$ and
each $S\in \mathbb{B}_{C(t)}$, a partial witness tree $\tau_{C}(t,S)$
as follows. Define $\tau_{C}^{(t)}(t,S)$ to be an isolated root vertex
labeled $S$. Going backwards through the event-log, for each
$i=t-1,t-2,\ldots,1$: (i) if there is a non-root vertex $v\in
\tau_{C}^{(i+1)}(t,S)$ such that $C(i)\in \Gamma^+([v])$, then among
all such vertices choose the one whose distance from the root is
maximum (break ties arbitrarily) and attach a new child vertex $u$ to
$v$ with label $C(i)$, thereby obtaining the tree
$\tau_{C}^{(i)}(t,S)$, (ii) else if $S\cap\vbl(C(i))$ is non-empty,
then attach a new child vertex to the root with label $C(i)$ to obtain
$\tau_{C}^{(i)}(t,S)$, (iii) else, set
$\tau_{C}^{(i)}(t,S)=\tau_{C}^{(i+1)}(t,S)$. Finally set
$\tau_C(t,S)=\tau_C^{(1)}(t,S)$.

Note that if $S= \vbl(A) \in \mathbb{B}_A$ then $\tau_C(t,S)$ is a
full witness tree with root $A$. For such a full witness tree, our
construction is the same as the construction of \emph{witness trees
associated with the log} in \cite{MoserTardos}.

We say that the partial witness tree $\tau_S$ {\bf occurs} in
event-log $C$ if there exists $t\in \mathbb{N}$ such that for some
$A\in \A$ such that $S\in \mathbb{B}_A$, $C(t)=A$ and
$\tau_S=\tau_{C}(t,S)$. An illustrating example of these definitions can be found in Figure \ref{fig:witnesstree}.

\begin{figure}
\centering
\begin{tabular}{cc}
\includegraphics[width=0.40\textwidth]{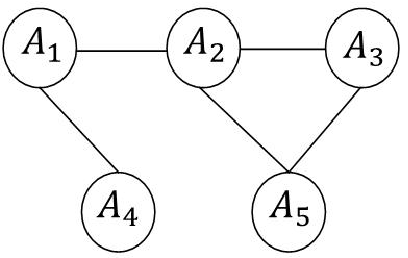} & 
\includegraphics[width=0.30\textwidth]{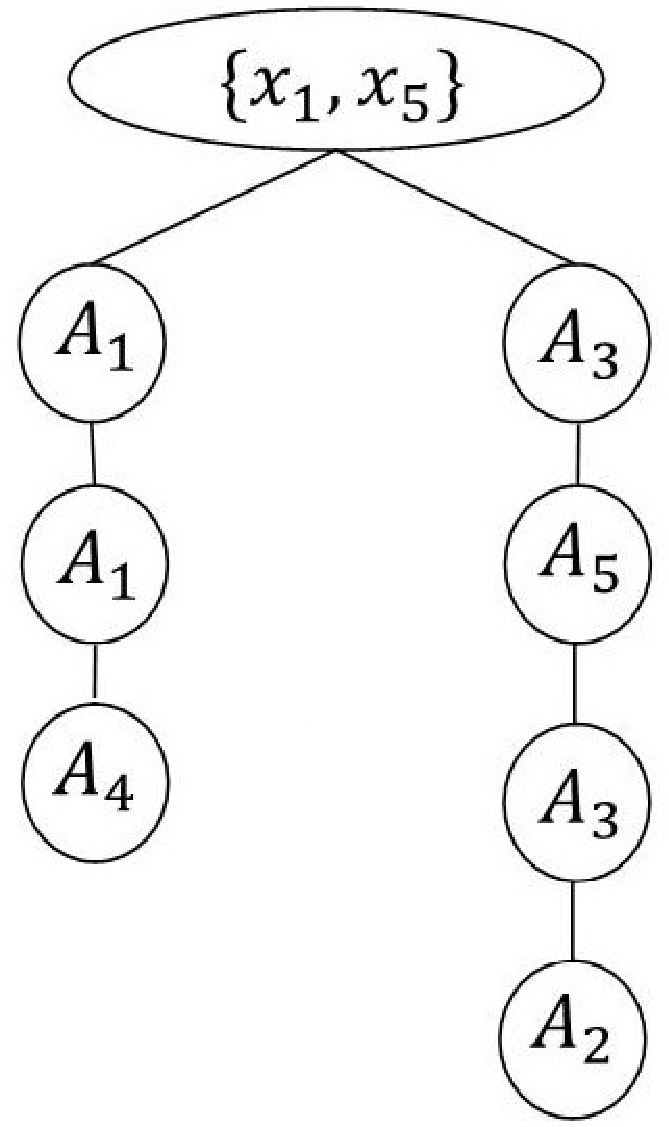}\\
Dependency graph & Partial witness tree $\tau_C(9,\{x_1,x_5\})$
\end{tabular}
\caption{The dependency graph and an example of a partial witness tree constructed from the event-log $C = A_2, A_3, A_5, A_4, A_1, A_3, A_1, A_5, A_2, \ldots$  where $\vbl(A_1)=\{x_1,x_2,x_3\}$, $\vbl(A_2)=\{x_1,x_4,x_5\}$, $\vbl(A_3)=\{x_4,x_5,x_6\}$, $\vbl(A_4)=\{x_3,x_7\}$, $\vbl(A_5)=\{x_4,x_6\}$. Note that the last occurrence of the event $A_5$ is not added to the witness since it does not share a variable with the variable subset $\{x_1,x_5\} \subset \vbl(A_2)$ that was selected as a root.}
\label{fig:witnesstree}
\end{figure}

For a table $T$, a {\bf $T$-check} on a partial witness tree $\tau_S$
uses table $T$ as follows: In decreasing order of depth, visit the
non-root vertices of $\tau_S$ and for a vertex with label $A$, take
the first unused value from $T$ for each $x\in \vbl(A)$ and check if
the resulting evaluation makes $A$ happen. The {\bf $T$-check passes}
if all events corresponding to vertices apart from the root, happen
when checked. We say that a partial witness tree is {\bf consistent
with a table $T$} if the $T$-check passes on the partial witness tree.

Most of the above definitions are simple extensions of the ones given in \cite{MoserTardos}. 

\subsection{Properties}
In this section we state and prove two important properties of the
partial witness tree which will be useful in obtaining the
deterministic sequential and parallel algorithms.

The following lemma proves that given a witness tree, one can use the
$T$-check procedure to exactly determine which values were used in the
resamplings that lead to this witness tree.

\begin{lemma}\label{lemma:witness-check}
For a fixed table $T$, if a partial witness tree $\tau_S$ occurs in the event-log $C_T$, then
\begin{enumerate}
\item $\tau_S$ is proper.
\item $\tau_S$ is consistent with $T$.
\end{enumerate}
\end{lemma}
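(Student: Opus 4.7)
The plan is to adapt the analogous arguments from Moser--Tardos for full witness trees, with only minor adjustments at the root (which carries a subset label rather than an event label and is therefore never checked). The central tool is the depth-versus-time correspondence induced by the rule ``attach to the deepest eligible non-root vertex'' in the backward construction of $\tau_{C_T}(t,S)$.

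For part (1), I would argue by contradiction. Suppose some vertex $v$ in $\tau_S$ has two children $u_1,u_2$ with the same label $A$, attached to $v$ at times $i_1>i_2$ during the backward pass. When index $i_2$ is processed, the vertex $u_1$ with label $A=C(i_2)$ is already a non-root vertex of the current tree, and certainly $C(i_2)=A\in\Gamma^+(A)=\Gamma^+([u_1])$, so $u_1$ is eligible. Since the construction picks an eligible non-root vertex of maximum depth, the new child must be attached at depth $\ge\mathrm{depth}(u_1)>\mathrm{depth}(v)$, contradicting the choice of $v$ as the parent of $u_2$. The same argument works verbatim when $v$ is the root, because $u_1$ is still a deeper non-root vertex that is eligible.

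For part (2), the heart of the matter is to show that for each variable $p\in\mathcal{P}$, the non-root vertices of $\tau_S$ whose labels contain $p$ are linearly ordered by depth, with deeper vertices corresponding to earlier times in $C_T$. Indeed, if $u,u'$ both contain $p$ in their labels and were attached at times $i<i'$, then in the backward construction $u'$ is inserted first; when time $i$ is processed, $[u]\in\Gamma^+([u'])$ (they share $p$ or coincide), so the depth-maximizing rule forces $u$ to be attached at a vertex at least as deep as $u'$, giving $\mathrm{depth}(u)>\mathrm{depth}(u')$. Consequently, the $T$-check, which visits vertices in order of decreasing depth, consumes entries $T(p,1),T(p,2),\ldots$ for $p$ in the same temporal order in which $p$ was touched during the actual run of Algorithm~2 on $T$.

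From here, part (2) follows immediately: if a non-root vertex $u$ with label $[u]=C(i)$ is visited as the $j_p$-th vertex containing $p$ in the $T$-check for each $p\in\vbl([u])$, then the values assigned to the variables of $[u]$ in the $T$-check are precisely $\{T(p,j_p):p\in\vbl([u])\}$, which is exactly the value-tuple that $[u]$ saw at time $i$ during the actual execution --- at which moment, by definition of $C_T$, $[u]$ happened. Hence every non-root event in $\tau_S$ is verified to happen by the $T$-check, so the $T$-check passes. The main obstacle in writing this up carefully is the bookkeeping for the depth/time correspondence along each variable's ``timeline,'' but once that lemma is in hand both claims drop out with no further calculation.
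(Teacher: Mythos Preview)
Your proposal is correct and follows essentially the same approach as the paper: both hinge on the depth--time monotonicity (if two non-root vertices share a variable, the one attached at the smaller time index lies strictly deeper), from which properness and the $T$-check property follow. The only organizational difference is that the paper states this monotonicity once up front and derives both parts from it, whereas you prove properness by a direct contradiction and then re-establish the monotonicity for part~(2); one small point worth making explicit in your write-up is that \emph{every} resampling of $p$ prior to time $i$ actually appears as a vertex of $\tau_S$ (because $u$ is already present and eligible when that earlier index is processed), which is what guarantees the pointer counts match and not merely that the orders agree.
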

\begin{proof}%[Proof of Lemma \ref{lemma:witness-check}]
The proof of this lemma is essentially Lemma 2.1 in \cite{MoserTardos} and 
included here for completeness. 

Since $\tau_S$ occurs in $C_T$, there exists some time instant $t$
such that for $S\in \mathbb{B}_{C_T(t)}$,
$\tau_S=\tau_{C_T}(t,S)$. For each $v\in \overline{V}(\tau_S)$, let
$d(v)$ denote the depth of vertex $v$ and let
$q(v)$ denote the largest value $q$ with $v$
contained in $\tau_{C_T}^{(q)}(t)$. We observe that $q(v)$ is 
the time instant in which $v$
was attached to $\tau_{C_T}(t,S)$
by the procedure constructing $\tau_{C_T}(t,S)$.

If $q(u)<q(v)$ for vertices $u,v \in \overline{V}(\tau_S)$ and
$\vbl([u])$ and $\vbl([v])$ are not disjoint, then
$d(u)>d(v)$. Indeed, when adding the vertex $u$ to
$\tau_{C_T}^{(q(u)+1)}(t)$ we attach it to $v$ or to another vertex of
equal or greater depth. Therefore, for any two vertices $u,v\in
\overline{V}(\tau_S)$ at the same depth $d(u)=d(v)$, $[u]$ and $[v]$
do not depend on any common variables, that is the labels in every
level of $\tau_S$ form an independent set in $G$. In particular
$\tau_S$ must be proper.

Now consider a non-root vertex $v$ in the partial witness tree
$\tau_S$. Let $P_i\in \vbl([v])$. Let $\mathcal{D}(i)$ be the set of
vertices $w\in \tau_S$ with depth greater than that of $v$ such that
$[w]$ depends on variable $P_i$.

When the $T$-check considers the vertex $v$ and uses the next unused
evaluation of the variable $P_i$, it uses the evaluation
$T(i,|\mathcal{D}(i)|)$. This is because the witness check visits the
vertices in order of decreasing depth and among the vertices with
depth equal to that of $v$, only $[v]$ depends on $P_i$ (as we proved
earlier that vertices with equal depth are variable disjoint). So the
$T$-check must have used values for $P_i$ exactly when it was
considering the vertices in $\mathcal{D}(i)$.

At the time instant of resampling $[v]$, say $t_v$, Algorithm 2
chooses $[v]$ to be resampled which implies that $[v]$ happens before
this resampling. For $P_i\in \vbl([v])$, the value of the variable
$P_i$ at $t_v$ is $T(i,{|\mathcal{D}(i)|})$. This is because the
pointer for $P_i$ was increased for events $[w]$ that were resampled
before the current instance, where $w\in \mathcal{D}(i)$. Note that
every event which was resampled before $t_v$ and that depends on $[v]$
would be present at depth greater than that of $v$ in $\tau_S$ by
construction. Hence, $\mathcal{D}(i)$ is the complete set of events
which led to resampling of $P_i$ before the instant $t_v$.

As the $T$-check uses the same values for the variables in $\vbl([v])$
when considering $v$ as the values that led to resampling of $[v]$, it
must also find that $[v]$ happens.
\end{proof}

Next, we prove a range lemma for partial witnesses, improving the range to a factor of two. 
 
\begin{lemma}\label{lemma:range}
%Let $\gamma=\frac{\log{M}}{\eps}$.
If a partial witness tree of weight at least $\gamma$ occurs in the
event-log $C_T$ and every vertex $v$ in the tree has weight at most
$\gamma$, then a partial witness tree of weight $\in [\gamma,2\gamma)$
occurs in the event-log $C_T$.
\end{lemma}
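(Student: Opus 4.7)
The plan is to proceed by strong induction on the pair $(w(\tau_S), |S|)$ ordered lexicographically. If $w(\tau_S) \leq 2\gamma$, then $\tau_S$ itself has weight in $[\gamma,2\gamma]$ and we are done, so assume $w(\tau_S) > 2\gamma$. The goal is to produce a strictly (lexicographically) smaller partial witness $\tau'$ occurring in $C_T$ with $w(\tau') \geq \gamma$; the inductive hypothesis then finishes. I distinguish two cases depending on whether $S \in \mathbb{B}_A$ is a singleton.

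\emph{Split case ($|S| \geq 2$).} Here $\mathbb{B}_A$ provides a disjoint decomposition $S = S_1 \sqcup S_2$ with $S_1, S_2 \in \mathbb{B}_A$. Define $\tau_{S_i} := \tau_{C_T}(t, S_i)$ for $i=1,2$, where $t$ is the time at which $\tau_S$ occurs; each is a partial witness occurring in $C_T$ by construction. The key combinatorial claim is that every non-root vertex of $\tau_S$ (identified by its event-log position) appears in $\overline{V}(\tau_{S_1}) \cup \overline{V}(\tau_{S_2})$, and conversely $\overline{V}(\tau_{S_i}) \subseteq \overline{V}(\tau_S)$. This yields $w(\tau_{S_1}) + w(\tau_{S_2}) \geq w(\tau_S)$ and $w(\tau_{S_i}) \leq w(\tau_S)$, so the heavier of $\tau_{S_1}, \tau_{S_2}$ has weight in $[w(\tau_S)/2, w(\tau_S)] \subseteq (\gamma, w(\tau_S)]$ and is lexicographically smaller than $\tau_S$ (since $|S_i|<|S|$), and the inductive hypothesis applies.

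\emph{Singleton case ($|S|=1$, say $S=\{p\}$).} The independent-set-at-each-depth property established in the proof of Lemma~\ref{lemma:witness-check} forces the root of $\tau_S$ to have at most one child $c$---any two children would both contain $p$ and hence share a variable. Let $B := [c]$ and let $i_c$ be the step at which $c$ was attached. I claim the subtree of $\tau_S$ rooted at $c$, with its root relabeled to $\vbl(B) \in \mathbb{B}_B$, is exactly $\tau' := \tau_{C_T}(i_c, \vbl(B))$. Hence $\tau'$ occurs in $C_T$ and has weight $w(\tau_S) - w(B)$; using $w(B) \leq \log(4m) \leq \gamma$ (the regime in which this lemma is applied, since $\gamma$ will be chosen at least $\log(4m)$ in Theorem~\ref{thm:general_deterministic}), we get $w(\tau') \geq w(\tau_S) - \gamma > \gamma$ and $w(\tau') < w(\tau_S)$, so the inductive hypothesis applies.

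Both structural claims are verified by backward induction on the construction step $i$, processed from $i=t-1$ down to $i=1$. In the split case, when $C(i)$ is attached to a non-root vertex $v$ in $\tau_S$, the inductive hypothesis places $v$ in $\tau_{S_1}$ or $\tau_{S_2}$, and the ``deepest compatible'' rule places $C(i)$ in the same one; when $C(i)$ is attached to the root of $\tau_S$, $\vbl(C(i)) \cap S \neq \emptyset$ forces $\vbl(C(i)) \cap S_i \neq \emptyset$ for some $i$, so $C(i)$ attaches in $\tau_{S_i}$, and the reverse containment is analogous. For the singleton reroot, the critical observation is that any event whose variable set meets $\{p\}$ automatically lies in $\Gamma^+(B)$ (since $p \in \vbl(B)$) and hence attaches inside the $c$-subtree rather than competing for the root of $\tau_S$; together with events entirely disjoint from $\vbl(B)$ being skipped in both trees, this ensures the $c$-subtree of $\tau_S$ evolves identically to the standalone construction of $\tau_{C_T}(i_c, \vbl(B))$. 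The main obstacle is precisely this bookkeeping, especially in the split case where the shapes of $\tau_{S_1}$ and $\tau_{S_2}$ may legitimately differ from the induced structure in $\tau_S$ under the ``deepest compatible'' attachment rule.
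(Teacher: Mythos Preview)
Your argument is correct and follows the same skeleton as the paper's proof---peel off the unique child when the root is forced to have only one, and otherwise split the root label via $\mathbb{B}_A$---but it is organized differently in two respects that are worth noting. First, you split cases by $|S|$ rather than by the number of children of the root; this lets you invoke the independent-set-at-each-depth fact from Lemma~\ref{lemma:witness-check} to \emph{prove} the root has a unique child when $|S|=1$, rather than taking ``one child'' as a hypothesis. Second, and more substantively, in the multi-variable case you always pass to the immediate children $S_1,S_2$ of $S$ in $\mathbb{B}_A$ and rely on the lexicographic order $(w(\tau),|S|)$ to make progress even when $w(\tau_{S_i})=w(\tau_S)$; the paper instead performs an explicit descent inside $\mathbb{B}_A$ searching for the first pair $S_L,S_R$ with $0<w(\tau_{S_L}),w(\tau_{S_R})<w(\tau_S)$, so that weight alone strictly decreases. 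Your version is cleaner and avoids that search; it would be worth adding one sentence on why the lexicographic induction is well-founded (e.g., all trees produced have $\overline{V}(\cdot)\subseteq\overline{V}(\tau_S^{\text{orig}})$, so only finitely many weights occur, and $|S|$ is a bounded positive integer). Your acknowledged bookkeeping in the split case---that the shapes of $\tau_{S_1},\tau_{S_2}$ need not match the induced subtrees of $\tau_S$---is real, but your inclusions $\overline{V}(\tau_{S_i})\subseteq\overline{V}(\tau_S)$ and $\overline{V}(\tau_S)\subseteq\overline{V}(\tau_{S_1})\cup\overline{V}(\tau_{S_2})$ (argued via the event-log positions $q(v)$) are exactly what is needed, and overlap between the two sides is harmless for the inequality $w(\tau_{S_1})+w(\tau_{S_2})\geq w(\tau_S)$.
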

\begin{proof}%[Proof of Lemma \ref{lemma:range}]
The proof is by contradiction. Consider a least weight partial
witness tree whose weight is at least $\gamma$ that occurs in the
event-log $C_T$, namely $\tau_S=\tau_{C_T}(t,S)$ for some $t$, $S\in
\mathbb{B}_{A}$ where $A=C_T(t)$. A witness tree with weight at least $\gamma$
exists by assumption and because there are only finitely many choices for $t$ and $S$
there exists also a least weight such tree. 
Suppose, for the sake of
contradiction, that $w({\tau_S})\geq 2\gamma$. We may assume that
Root$(\tau_S)$ has at least one child, otherwise, the weight of the
tree is zero. We have two cases:

Case (i): Root$(\tau_S)$ has only one child $v$. Let $t'$ be the
largest time instant before $t$ at which $[v]$ was resampled. Note
that this resampling of $[v]$ corresponds to the child $v$ of the root
of $\tau_S$. Now, consider the partial witness tree
$\tau_S'=\tau_{C_T}(t',S'=\vbl([v]))$. Since $\tau_S'$ contains one
less vertex than $\tau_S$, $w({\tau_S'}) < w({\tau_S})$. Also, since
the weight of any vertex $v$ in the tree is at most $\gamma$ we get
that $w({\tau_S'}) = w({\tau_S}) - w([v]) \geq \gamma$. Finally, by
definition of $\tau_S'$, it is clear that $\tau_S'$ occurs in the
event-log $C_T$. Thus, $\tau_S'$ is a counterexample of smaller weight
contradicting our choice of $\tau_S$.

Case (ii): Root$(\tau_S)$ has at least two children. Since the
 labeling clauses of these children have pairwise disjoint sets of
 variables and they have to share a variable with $S$, we have that
 $S$ consists of at least $2$ variables. Thus, it also has at least two
 children in the variable splitting $\mathbb{B}_A$. In $\mathbb{B}_A$,
 starting from $S$, we now explore the descendants of $S$ in the
 following way, looking for the first vertex whose children $S_L$ and
 $S_R$ reduce the weight of the tree, i.e., $0 < w({\tau_{S_L}}),
 w(\tau_{S_R})<w(\tau_S)$, where $\tau_{S_L}=\tau_{C_T}(t,S_L)$ and
 $\tau_{S_R}=\tau_{C_T}(t,S_R)$: if a vertex $S_L$ reduces the weight
 of the tree without making it zero (i.e.,
 $0<w(\tau_{S_L})<w(\tau_S)$), then its variable disjoint sibling
 $S_R$ must also reduce the weight of the tree; on the other hand, if
 a vertex $S_L$ reduces the weight of the tree to zero, then its
 sibling $S_R$ cannot reduce the weight of the tree. Suppose $S_L$
 reduces the weight to zero, then we explore $S_R$ to check if its
 children reduce the weight. It is easy to see that this stops at the
 latest when $S_L$ and $S_R$ are leaves in $\mathbb{B}_A$.

By definition, both $\tau_{S_L}$ and $\tau_{S_R}$ occur in the
event-log $C_T$. Since we pick the first siblings $S_L$ and $S_R$ (in
the breadth first search) which reduce the weight, their parent $S'$
is such that $w(\tau_{S'})\geq w(\tau_S)$, where
$\tau_{S'}=\tau_{C_T}(t,S')$. We are considering only those $S'$ such
that $S'\subseteq S$. This implies that $w(\tau_{S'})\leq
w(\tau_{S})$. Hence, $w(\tau_{S'})=w(\tau_{S})$ and for every vertex
that has label $A$ in $\tau_{S}$, one can find a unique vertex labeled
by $A$ in $\tau_{S'}$ and vice-versa. Further, $S'$ is the disjoint
union of $S_L$ and $S_R$; therefore, for each vertex with label $A$ in
$\tau_{S'}$, one can find a unique vertex labeled by $A$ either in
$\tau_{S_L}$ or $\tau_{S_R}$.

As a consequence, we have that for every vertex with label $A$ in
$\tau_{S}$, one can find a unique vertex labeled by $A$ either in
$\tau_{S_L}$ or $\tau_{S_R}$. Hence, $w(\tau_{S_L})+w(\tau_{S_R})\geq
w(\tau_S)$ and therefore, $\max\{w(\tau_{S_L}),w(\tau_{S_R})\}\geq
w(\tau_{S})/2\geq \gamma$. So, the witness with larger weight among
$\tau_{S_L}$ and $\tau_{S_R}$ has
weight at least $\gamma$ but less than
that of $\tau_S$. This contradicts our choice of $\tau_S$.
\end{proof}

\section{Deterministic Algorithm}\label{sec:deterministic_algorithm}
In this section we describe our sequential deterministic algorithm and prove Theorem \ref{thm:general_deterministic}. 

For the rest of the paper we define a set of {\bf forbidden witnesses} $F$ which contains all partial witness trees with weight between $\gamma$ and $2\gamma$. %We recall that $\gamma = (\log M)/{\eps}$ show in Lemma~\ref{lemma:expected-no-of-witnesses} that the expected number of forbidden witnesses is smaller than one. %\knote{Grammatically incorrect sentence}. 
We define a table to be a {\bf good table} if no forbidden witness is consistent with it. With these definitions we can state our deterministic algorithm.

\begin{figure}[!ht]
\medskip \noindent {\bfseries Algorithm 3:\\ Sequential Deterministic Algorithm}
\begin{enumerate}
\item Enumerate all forbidden witnesses in $F$.
\item Construct a good table $T$ via the method of conditional probabilities:\\
For each variable $p\in \P$, and for each $j$, $0\leq j\leq
2\gamma/w_{min}$, do
\begin{itemize}
\item Select a value for $T(p,j)$ that minimizes the expected number
of forbidden witnesses that are consistent with $T$ when all
entries in the table chosen so far are fixed and the yet to be chosen
values are random.
\end{itemize}
\item Run Algorithm 2 using table $T$ as input.
\end{enumerate}
\end{figure}

We next give a short overview of the running time analysis of Algorithm 3 before embarking on the proof of Theorem \ref{thm:general_deterministic}.

The running time of Algorithm 3 depends on the time to construct a
good table $T$ by the method of conditional probabilities. To construct such a
table efficiently, we prove that the number of forbidden witnesses is
small (polynomial in $M$) using Lemma~\ref{lemma:enumerate}. Further,
we need to show that the method of conditional probabilities indeed
constructs a good table. We show this by proving in Lemma~\ref{lemma:expected-no-of-witnesses}
that the expected number of forbidden witnesses that are consistent 
with $T$ initially (when all values are random) is smaller than one. 
This invariant is maintained by the method of conditional probabilities
resulting in a fixed table with less than one (and therefore no)
forbidden witnesses consistent with it. By Lemmas
\ref{lemma:witness-check} and \ref{lemma:range}, it follows that no witness
of weight more than $\gamma$ occurs when Algorithm 2 is run on the table
$T$. Finally, the maximum number of vertices in a partial witness tree of
weight at most $\gamma$ is small. This suffices to show that the size of 
table $T$ is small and thus Algorithm 3 is efficient.

\begin{lemma}\label{lemma:expected-no-of-witnesses}
The expected number of forbidden witnesses consistent with a 
random table $T$ is less than $1/2$.
\end{lemma}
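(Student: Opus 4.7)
The plan is to split the sum over $F = F_1 \cup F_2$ and bound each expected count by $1/4$. The starting point for both parts is Lemma~\ref{lemma:witness-check}: if $\tau$ occurs in $C_T$ then $\tau$ passes the $T$-check. Since distinct non-root vertices of $\tau$ consume disjoint fresh entries of the random table $T$, their evaluations are mutually independent, so I would use
\[
\Pr[\tau \text{ occurs in } C_T] \;\le\; \Pr[\tau \text{ passes the } T\text{-check}] \;=\; \prod_{v \in \overline V(\tau)} \Pr[[v]] \;\le\; \prod_{v \in \overline V(\tau)} x'([v])^{1+\eps},
\]
by the $\eps$-slack hypothesis $\Pr[A] \le x'(A)^{1+\eps}$. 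For full witness trees (needed for $F_1$), the standard Moser--Tardos Galton--Watson bound gives the same inequality for the expected number of occurrences.

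For $F_1$, each tree is a single vertex labeled by some $A\in\A\setminus\overline\A$, so $x'(A) < 1/(4m)$. The bound above gives $\Pr[\tau\text{ occurs}] \le x'(A)^{1+\eps} < (1/(4m))^{1+\eps}$, and summing over the at most $m$ such events yields
\[
\sum_{\tau\in F_1}\Pr[\tau\text{ occurs}] \;<\; m \cdot (1/(4m))^{1+\eps} \;=\; (1/4) \cdot (1/(4m))^{\eps} \;\le\; 1/4.
\]

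For $F_2$, I would first use the weight condition $w(\tau_S) \ge \gamma = (\log M)/\eps$ to extract a $1/M$ factor:
\[
\prod_{v\in\overline V(\tau_S)} x'([v])^{1+\eps} \;=\; 2^{-(1+\eps)\,w(\tau_S)} \;\le\; \frac{1}{M}\prod_{v\in\overline V(\tau_S)} x'([v]).
\]
Then, grouping by root label $S \in \mathbb{B}_A$ for $A \in \overline\A$, I would adapt the Moser--Tardos Galton--Watson summation from full to partial witness trees to obtain
\[
\sum_{\substack{\tau_S\text{ partial witness}\\\text{with root label }S}} \prod_{v\in\overline V(\tau_S)} x'([v]) \;\le\; \frac{1}{x'(A)}\cdot\frac{x(A)}{1-x(A)}.
\]
The extra $1/x'(A)$ factor arises because the root subset $S$ constrains the ``blocked'' siblings of the root's children to only those events that share variables with $S$, rather than the full $\Gamma(A)$ that appears in the product defining $x'(A)$. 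Summing over $S \in \mathbb{B}_A$ (at most $2|\vbl(A)|$ choices) and then over $A \in \overline\A$, and plugging into the definition of $M$, gives
\[
\sum_{\tau \in F_2} \Pr[\tau\text{ occurs}] \;\le\; \frac{1}{M} \sum_{A\in\overline\A} \frac{2|\vbl(A)|}{x'(A)}\cdot\frac{x(A)}{1-x(A)} \;\le\; \frac{1}{M} \cdot \frac{M}{4} \;=\; \frac{1}{4},
\]
and combining with the $F_1$ bound yields the desired total strictly less than $1/2$.

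The main technical hurdle will be the second displayed inequality in the $F_2$ analysis: extending the Moser--Tardos Galton--Watson computation from full to partial witness trees while verifying that the $1/x'(A)$ factor is exactly what is needed to line up with the term appearing in the definition of $M$. Once this bound is established, the rest is direct algebra from the choices of $\gamma$ and $M$.
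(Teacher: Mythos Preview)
Your proposal is correct and follows essentially the same route as the paper: split $F=F_1\cup F_2$, use Lemma~\ref{lemma:witness-check} plus the $\eps$-slack to bound the pass probability by $\prod x'([v])^{1+\eps}$, extract $2^{-\eps\gamma}=1/M$ from the $F_2$ weight condition, and finish with the Moser--Tardos Galton--Watson sum together with the definition of $M$.

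The one place where your write-up differs from the paper is what you call ``the main technical hurdle,'' namely adapting the Galton--Watson bound from full to partial witness trees with a fixed root label $S$. The paper avoids this adaptation entirely: it observes that every proper partial witness tree with root label in $\mathbb{B}_A$ is obtained from a proper full witness tree rooted at $A$ by replacing the root label, so
\[
\sum_{\tau\in\Upsilon_A}\prod_{v\in\overline V(\tau)}x'([v])\;\le\;|\mathbb{B}_A|\sum_{\tau\in\Upsilon_A'}\prod_{v\in\overline V(\tau)}x'([v])\;=\;\frac{|\mathbb{B}_A|}{x'(A)}\sum_{\tau\in\Upsilon_A'}\prod_{v\in V(\tau)}x'([v]),
\]
and then applies the standard Moser--Tardos bound $\sum_{\tau\in\Upsilon_A'}\prod_{v\in V(\tau)}x'([v])\le x(A)/(1-x(A))$ to full trees verbatim. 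In particular, the $1/x'(A)$ factor you identify does not come from a modified branching-process argument at the root; it comes simply from the fact that the product over $\overline V(\tau)$ omits the root vertex, which carries the factor $x'(A)$ in the full-tree product. Your per-$S$ inequality is still correct (partial trees with root $S$ inject into full trees with root $A$), so your argument goes through, but the paper's reduction makes the ``hurdle'' disappear.
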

\begin{proof}%[Proof of \ref{lemma:expected-no-of-witnesses}]
For each event $A\in {\A}$, let $\Upsilon_{A}$ and $\Upsilon_{A}'$ be
the set of partial and respectively full witness trees in $F$ with
root from $\mathbb{B}_A$. With this notation the expectation in
question is exactly:
\[
\sum_{A\in {\A}}\sum_{\tau \in \Upsilon_{A}} \prob{\tau \text{ is consistent with } T}.
\]
Note that according to Lemma \ref{lemma:witness-check}, a partial
witness tree is consistent with a table $T$ if and only if it passes the
$T$-check. Clearly, the probability that a witness $\tau$ passes the
$T$-check for the random table $T$ is $\prod_{v\in
\overline{V}(\tau)}\prob{[v]}$ (recall that $\overline{V}(\tau)$
denotes the set of 
non-root vertices in $\tau$). Using this and the assumption in
Theorem \ref{thm:general_deterministic} that $\prob{[v]} \leq
x'([v])^{1+\eps}$ we get that the expectation is at most 
\[
E := \sum_{A\in {\A}}\sum_{\tau \in \Upsilon_{A}} \prod_{v\in \overline{V}(\tau)}x'([v])^{1+\eps}.
\]
To relate this to the full witness trees considered in
\cite{MoserTardos}, we associate with every partial witness tree
$\tau$ (in $\Upsilon_A$) a full witness tree $\tau'$ (in
$\Upsilon'_A$) by replacing the root subset $S\in \mathbb{B}_A$ with
the full set $\vbl(A)$. Note that the weights of $\tau$ and $\tau'$ are
the same (as is the quantity $\prod_{v \in \overline{V}(\tau)}
x'([v])^{1+\eps}$). Note also that every full witness
tree has at most $|\mathbb{B}_A|$ partial witness trees associated
with it. Hence, we can rewrite the expression to get
\begin{align*}
E&\leq \sum_{A\in {\A}} |\mathbb{B}_A| \sum_{\tau \in \Upsilon_{A}'} \prod_{v\in \overline{V}(\tau)}x'([v])^{1+\eps}\\
&\leq \sum_{A\in {\A}} |\mathbb{B}_A| \sum_{\tau \in \Upsilon_{A}'} \left(\prod_{v\in \overline{V}(\tau)}x'([v])\right)2^{-\gamma\eps},\\
\end{align*}
where the last expression follows because, for $\tau\in \Upsilon_A'$, we have
\begin{align*}
w({\tau}) = -\log{\prod_{v\in \overline{V}(\tau)} x'([v]) } &\geq \gamma,\\
\implies \prod_{v\in \overline{V}(\tau)}x'([v])\leq 2^{-\gamma}.\\
\end{align*}

Next we transition from partial to full witness trees by including the
root again (and going from $\overline{V}$ to $V$).

$$E \leq \sum_{A\in {\A}} \frac{|\mathbb{B}_A|}{x'(A)}\left(\sum_{\tau \in \Upsilon_{A}'} \prod_{v\in V(\tau)}x'([v])\right)2^{-\gamma\eps}.$$

Now we can use the following result of Moser--Tardos (Section 3 in \cite{MoserTardos}) that bounds the expected number of full witnesses with root $A$:

$$\sum_{\tau \in \Upsilon_{A}'} \prod_{v\in V(\tau)}x'([v]) \leq \frac{x(A)}{1-x(A)}.$$

Their proof makes use of a Galton--Watson process that randomly
generates proper witness trees with root $A$ (note that by Lemma
\ref{lemma:witness-check} all partial witness trees are proper). Using
this,
\begin{align*}
E &\leq \sum_{A\in \overline{\A}} \frac{|\mathbb{B}_A|}{x'(A)}\cdot\left(\frac{x(A)}{1-x(A)}\right)2^{-\gamma\eps}\\
&< \frac{M}{2} 2^{-\gamma\eps} \leq \frac{1}{2}.
\end{align*}

Here the penultimate inequality follows from the fact that
$|\mathbb{B}_A| < 2|\vbl(A)|$ and the definition of $M$, and the last
inequality follows from the choice of $\gamma=(\log{M})/\eps$.
\end{proof}

Owing to the definition of forbidden witnesses \emph{via weights},
there is an easy way to count the number of forbidden witnesses using
the fact that their expected number is small.

%The number of witnesses with weight at most $2\gamma$ is at most $O(M^{2(1+1/\eps)})$. The number of forbidden witnesses is less than $M^{2(1+1/\eps)}$.

\begin{lemma}\label{lemma:enumerate}
The number of witnesses with weight at most $2\gamma$ is at most $O(M^{2(1+1/\eps)})$. In particular, the number of forbidden witnesses is less than $M^{2(1+1/\eps)}$.
\end{lemma}
\begin{proof}
%Let $F'$ denote the collection of witnesses of weight at most $2\gamma$. Since, each forbidden witness $\tau$ has weight $w(\tau)\le 2\gamma$, we have that $F\subseteq F'$. Now,
Each forbidden witness $\tau\in F$ has weight $w(\tau)\leq 2\gamma$ and thus
\begin{align*}
|F| (2^{-2\gamma})^{(1+\eps)} &\leq \sum_{\tau \in F} (2^{-w(\tau)})^{(1+\eps)}\\
&= \sum_{\tau \in F} \left(\prod_{v\in \overline{V}(\tau)} x'([v])\right)^{(1+\eps)}\\
&= E \leq \frac{M}{2} 2^{-\gamma\eps} \leq \frac{1}{2}.
\end{align*}
Here, the final line of inequalities comes from the proof of Lemma \ref{lemma:expected-no-of-witnesses}. Therefore the number of forbidden witnesses is at most 
$$|F| \leq \left(\frac{M}{2} 2^{-\gamma\eps}\right) \cdot 2^{2\gamma(1+\eps)} =\left(\frac{M}{2}\right)2^{\gamma(2+\eps)}\leq \frac{1}{2} M^{2(1+1/\eps)}.$$

Using the same argument with any $\gamma'$ instead of $\gamma$ shows that the number of witnesses with weight in $[\gamma',2\gamma']$ is at most $({M}/{2}) \cdot 2^{\gamma'(2+\eps)}$. Since this is exponential in $\gamma'$ the total number of witnesses with weight at most $2\gamma$ is dominated by a geometric sum which is $O(M^{2(1+1/\eps)})$.
\end{proof}

We are now ready to prove Theorem \ref{thm:general_deterministic}.

\begin{proof}[Proof of Theorem \ref{thm:general_deterministic}]

%By Lemma \ref{lemma:enumerate}, there are at most $M^{2(1+1/\eps)}$ forbidden witnesses. Each of them has weight at most $2\gamma$ and thus consists of at most $x_{\max} = (2\gamma/w_{min})+1 = (2\log{M})/(\eps w_{min})+1$ vertices. Lemma \ref{lemma:enumerate} also shows that for every $\gamma'$ the number of witnesses with weight in $[\gamma',2\gamma']$ is at most $\frac{M}{2} \cdot 2^{\gamma'(2+\eps)}$. Since this is exponential in $\gamma'$ the total number of witnesses with weight at most $2\gamma$ is dominated by $O(M^{2(1+1/\eps)})$, too.

We first describe how the set of forbidden witnesses in the first step of the deterministic algorithm (Algorithm 3) is obtained. 

\paragraph{Enumeration of witnesses.} We enumerate all witnesses of weight at most $2\gamma$ and then discard the ones with weight less than $\gamma$. According to Lemma \ref{lemma:enumerate}, there are at most $M^{2(1+1/\eps)}$ witnesses of weight at most $2\gamma$ and each of them consists of at most $x_{\max} = (2\gamma/w_{min})+1 = (2\log{M})/(\eps w_{min})+1$ vertices. In our discussion so far we did not need to consider the order of children of a node in our witness trees. However, for the enumeration it will be useful to order the children of each node from left to right. We will build witnesses by attaching nodes level-by-level and from left to right. We fix an order on the events according to their weights, breaking ties arbitrarily, and use the convention that all witnesses are represented so that for any node its children from left to right have increasing weight. We then say a node $v$ is {\em eligible} to be attached to a witness $\tau$ if in the resulting witness $\tau'$ the node $v$ is the deepest rightmost leaf in $\tau'$. With this convention the enumeration proceeds as follows: 

As a preprocessing step for every event $A$ we sort all the events in $\Gamma^+(A)$ according to their weight in $O(m^2 \log m)$ time. Then, starting with $W_1$, the set of all possible roots, we incrementally compute all witnesses $W_x$ having $x = 1, \ldots, x_{\max}$ nodes and weight at most $2 \gamma$. To obtain $W_{x+1}$ from $W_{x}$ we take each witness $\tau \in W_x$ and each node $v \in \tau$ and check for all $A \in \Gamma^+([v])$ with weight more than the current children of $v$, in the order of increasing weight whether a node $v'$ with $[v']=A$ is eligible to be attached to $\tau$ at $v$. If it is eligible, and the resulting new witness $\tau'$ has weight at most $2\gamma$, then we add $\tau'$ to $W_{x+1}$. It is clear that in this way we enumerate all forbidden witnesses without producing any 
witness more than once. 

We now analyze the time required by the above enumeration procedure. 
We write down each witness explicitly, taking $O(x_{\max}\log{M})$ time and space per witness. For each witness it takes linear (in the number of nodes) %%and number of created new witnesses) 
time to find the nodes with eligible children. Note that attaching children to a node in the order of increasing weight guarantees that at most one attachment attempt per node fails due to large weight. Thus, the total time to list all forbidden witnesses is at most $O(x_{\max} M^{2(1+1/\eps)}\log{M})$.

\paragraph{Finding a good table.} The running time to find a good table $T$ using the method of conditional probabilities as described in Algorithm 3 can be bounded as follows: For each of the $n$ variables, the table $T$ has $2\gamma/w_{min} = x_{\max}$ entries to be filled in. For each of those entries at most $D$ possible values need to be tested. For each value we compute the
conditional expectation of the number of forbidden witnesses that are 
consistent with the partially filled in table $T$ by computing the conditional
probability of each forbidden witness $\tau \in F$ to pass the $T$-check
given the filled in values and summing up these probabilities.
This can be done by plugging in the fixed values into each of the at most $x_{\max}$ nodes of $\tau$
similar to the $T$-check procedure, computing the conditional probability in $t_{C}$ time and computing the product
of these conditional probabilities. Thus, the total time to compute $T$ is at most
$$O(n \cdot x_{max} \cdot D \cdot |F| \cdot  x_{\max} \cdot t_{C}) = O\left(\frac{DM^{3+2/\eps}\log^2 {M}}{\eps^2 w_{min}^2}t_C\right).$$

To complete the proof we show that the running time of the sequential algorithm on a table $T$ obtained by Step 2 of the deterministic algorithm is at most $O\left(m^2 \cdot x_{\max} \cdot t_C\right)$:

First, we note that by running the sequential algorithm using table
$T$, none of the forbidden witnesses can occur in the event-log
$C_T$. This is because the table is obtained by the method of
conditional probabilities: In the beginning of the construction of the
table, when no value is fixed, the expected number of forbidden
witnesses that occur in the event-log is less than $1/2$ as proved in
Lemma \ref{lemma:expected-no-of-witnesses}. This invariant is
maintained while picking values for variables in the table. Thus, once
all values are fixed, the number of witness trees in $F$ that occur in
the event-log $C_T$ is still less than $1/2$ and hence zero. 

This
implies that the sequential algorithm with $T$ as input resamples each
event $A\in \A$ at most $x_{\max}$ times. Indeed, if some event $A \in \A$ is
resampled more than $x_{\max}$ times, then $A$ occurs in the event-log $C_T$
at least $x_{\max}$ times. Now, the weight of the partial witness tree
associated with the last instance at which $A$ was resampled, would be
at least $x_{\max} w_{min}$ which is more than $2\gamma$. According to Lemma~\ref{lemma:range}, 
which is applicable since $\gamma=(\log{M})/\eps$
is larger than the maximum weight event, there would also be a
forbidden witness of weight between $\gamma$ and $2\gamma$ occurring
in $C_T$, a contradiction. Therefore, the number of resamplings done
by Algorithm 2 is $O\left(m \cdot x_{\max} \right)$ and the total running time 
for Algorithm 2 using table $T$ is $O\left(m^2 \cdot x_{\max} \cdot t_C\right)$: the additional factor $m  \cdot t_C$ comes from the time
needed to find an event that happens. This running time is
smaller than the upper bound for the time needed to find a good
table $T$.

This shows that Algorithm 3 terminates in the stated time bound. Lastly, the correctness of the
algorithm follows directly from the fact that the algorithm only terminates if a good assignment
is found. 
\end{proof}

From the general deterministic algorithm it is easy to obtain the corollary regarding $k$-CNF by using the standard reduction to the symmetric LLL and plugging in the optimal values for the parameters. 

\begin{proof} [Proof of Corollary \ref{thm:kCNF_deterministic}]
For a $k$-CNF formula with clauses $\A=\{A_1,\ldots,A_m\}$, for each clause $A\in \A$ we define an event $A$ and say that the event happens if the clause is unsatisfied. Further, each variable appearing in the formula picks values uniformly at random from $\{0,1\}$. Then, for every event $A$, $\prob{A}=2^{-k}$. As remarked in Section~\ref{sec:mtderand}, we may assume that $k < \log m$, for otherwise the problem becomes simple. If $d$ is the maximum number of clauses that a clause shares its variables with, setting $x(A)=1/d$ for all $A\in \A$, we obtain that $x'(A)>{1}/{de}$. The condition that $d\leq 2^{k/(1+\eps)}/e$ then implies for all events $A$ that $\prob{A}\leq x'(A)^{1+\eps}$ as required by the LLL-condition. Therefore, we use parameters $t_C=O(k)$, $w_{min}\approx k$, $D=2$, $|\vbl(A)|=k$ and obtain $M=O(n+m+mk+d)=O(m\log{m})$. With these parameters the corollary follows directly from Theorem~\ref{thm:general_deterministic}. 
\end{proof}

\section{Parallel Algorithm}\label{sec:par}
In this section we present an efficient parallel algorithm (outlined in Sec.~\ref{subsec:parallel}) and
analyze its performance, thereby proving Theorem~\ref{thm:parallel_alg}. 

In the design of our sequential algorithm, we used Algorithm 2 as a subroutine which takes an input table, and uses 
it to search for an assignment for which none of the bad events happens. 
This reduced the problem to finding a good input table. For designing the parallel algorithm, Moser--Tardos already
provided the parallel counterpart of Algorithm 2, and so what remains is to find a good table. 
Our algorithm relies on the following observation: Instead of sampling the values in the table independently at random, if we choose it from a distribution that is a $(k,\delta)$-approximation of the original distribution (for appropriate $k$ and $\delta$), the algorithm behaves as if the values in the table had been chosen independently at random (Proposition \ref{prop:fooling_complexity}).  The support of a $(k,\delta)$-approximation can be chosen to be small and
can be generated fast in parallel, so this gives us a small set of tables which is guaranteed to contain at least one table on which the algorithm terminates quickly (Lemma~\ref{lemma:parallel-expected-witnesses}).  Our algorithm runs the Moser--Tardos parallel algorithm
on each of these tables in parallel, and stops as soon as one of the tables leads to a good evaluation.

We begin by describing the two ingredients that we will need. 

\subsection{Limited independence probability spaces} \label{subsec:limitedind}

We need the notion of $(k,\delta)$-approximate distributions to describe our algorithm.

\begin{definition}
$(k,\delta)$-approximations \cite{Evenetal}:  Let $\mathcal{S}$ be a 
product probability distribution on a finite domain $S_1 \times S_2 \times \ldots \times S_s$ given 
by mutually independent random variables $X_1, \ldots, X_s$, where $X_i \in S_i$. 
For positive integer $k$ and constant $\delta \in (0,1)$, a probability distribution $\mathcal{Y}$ on 
$S_1 \times S_2 \times \ldots \times S_s$ is said to be a $(k,\delta)$-approximation
of $\mathcal{S}$ if the following holds.  For every $I \subseteq [s]$ such that $|I| \leq k$, and every 
$v \in S_1 \times S_2 \times \ldots \times S_s$ we have
\begin{align*}
|\Pr_{\mathcal{S}}[v_I]-\Pr_{\mathcal{Y}}[v_I]| \leq \delta,
\end{align*}
where $\Pr_{\mathcal{S}}[v_I]$ denotes the probability that for a
random vector $(x_1, \ldots, x_s)$ chosen according to the probability
distribution $\mathcal{S}$, we get $x_i = v_i$ for $i \in I$; the definition of 
$\Pr_{\mathcal{Y}}[v_I]$ is analogous.
\end{definition}

%\knote{Any particular reason to change the definition to ``product'' distribution?}

%% Our algorithm uses the support $Y$ of a $(k,\delta)$-approximation $\mathcal{Y}$
%% of $\mathcal{S}$. More specifically it uses \cite{NaorNaor} to efficiently 
%% construct in parallel a $Y$ that is much smaller support than $|D_1 \times D_2
%% \times \ldots \times D_n|$, the support of $\mathcal{D}$. 

The support $Y$ of a $(k,\delta)$-approximation $\mathcal{Y}$ of $\mathcal{S}$ can be 
constructed efficiently in parallel. We use the construction described in \cite{Evenetal} (which in turn uses \cite{NaorNaor}).
This construction builds a $(k,\delta)$-approximation to a product space with $t$ variables
with a support size of $|Y|=\mathrm{poly}(2^k, \log t, \delta^{-1})$. %
%\bnote{Is there indeed no dependence on $D$?} \nnote{that's right} 
The construction can be
parallelized to run in time $O(\log{t} + \log{k} +  \log{1/\delta} + \log{D})$  
using $\text{poly}(2^k/\delta) t D$  processors, where $D$ is again the maximum domain size for a variable. 

For our algorithm we want approximately random tables of small size. 
More formally we will work with tables containing at most $\ceil{\gamma/w_{min}}$ columns. So, we set $t = n \cdot \ceil{\gamma/w_{min}}$ and $S_1 \times S_2 \times \ldots \times S_s = (D_1 \times D_2 \times \ldots \times D_n)^{\ceil{\gamma/w_{min}}}$. We furthermore set $k = 2c \gamma$, $\delta^{-1} = 3 M^{2+2/\eps}D^{2c\gamma}$ and 
$\mathcal{S}$ to be the distribution obtained by independently sampling each entry in the table according to its distribution.
For these values and recalling that $\gamma = (\log{M})/\epsilon$ the support $Y$ of the $(k, \delta)$-approximation $\mathcal{Y}$ obtained by the construction mentioned above has size $\mathrm{poly}(2^{2c \gamma}, \log{(n \cdot\gamma/w_{min})}, 3 M^{2+2/\eps}D^{2c\gamma}) = M^{O((c/\epsilon)\log D)}$, and it can be constructed in parallel in time 
$O(\gamma \log{D} +  \log(1/w_{min}))$ using 
$M^{O((c/\epsilon)\log D)}$ processors. 
%\bnote{The correct bounds for the four $???$ above need to be filled in.}
%\knote{What is t?}
%\nnote{this actually is polylog{M}. and what's the number of processors?} 

\subsection{Decision trees}

In Theorem~\ref{thm:parallel_alg}, our assumption about how the events depend on the variables was in terms of 
decision tree complexity. In this section we recall the definition of decision trees, and show some simple properties
needed in the sequel. 

Let $S = D_1 \times \ldots \times D_n$, and let $f : S \rightarrow \{0, 1\}$ be a Boolean function. We denote the
elements of $S$ by  $(x_1, x_2, \ldots, x_n)$ where $x_i \in D_i$ for $1 \leq i \leq n$. A decision tree for 
computing $f(x_1, x_2, \ldots, x_n)$ is a rooted tree $T$, where each internal vertex of the tree is labeled by 
one of the variables from $\{x_1,\ldots,x_n\}$, and each leaf is labeled by $0$ or $1$. An internal vertex
labeled by $x_i$, has $|D_i|$ children, with their corresponding edges being labeled by distinct elements from 
$D_i$.
To compute $f(x_1, x_2, \ldots, x_n)$, the execution of $T$ determines a root-to-leaf path as follows: starting at 
the root we query the value of the variable labeling a vertex and follow the edge to the child which is labeled 
by the answer to the query. When we reach the leaf, we output the label of the leaf. The complexity of a decision tree is its depth. The decision tree complexity of a function $f$ is the depth of 
the shallowest decision tree computing $f$. 

\begin{prop}\label{prop:fooling_complexity}
Let $S = D_1 \times \ldots \times D_n$ be a product space of finite domains of size at most $D = \max_i |D_i|$, let $\mathcal{P}$ be an independent product distribution on $S$ and let $f,f_1,f_2:S \rightarrow \{0,1\}$ be Boolean functions on $S$. 
\begin{enumerate}
	\item If $f_1$ and $f_2$ have decision tree complexity $k_1$ and $k_2$ respectively, then the decision tree complexity of $f_1 \wedge f_2$ is at most $k_1+k_2$. 
	\item If $f$ has decision tree complexity at most $k$ then every $(k,\delta)$-approximation $\mathcal{Y}$ of $\mathcal{P}$ is  $D^k\delta$-indistinguishable from $\mathcal{P}$, i.e.,
$$|E_{\mathcal{Y}}(f) - E_{\mathcal{P}}(f)| \leq D^k\delta.$$
\end{enumerate}
\end{prop}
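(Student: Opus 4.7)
My plan is to treat the two bullet points separately, since they are logically distinct although both revolve around the structure of decision trees.

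For the first claim, I would construct the decision tree for $f_1 \wedge f_2$ by composition. Run the decision tree $T_1$ for $f_1$, which uses at most $k_1$ queries. If it terminates at a leaf labeled $0$, output $0$ immediately (since $f_1 \wedge f_2 = 0$). Otherwise, $f_1 = 1$, and we continue by running the decision tree $T_2$ for $f_2$, which uses at most $k_2$ additional queries, and output the value found. The total number of queries along any root-to-leaf path is at most $k_1 + k_2$, giving the claimed bound. This is essentially immediate once the construction is written down; no subtle step is involved.

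For the second claim, the key observation is that any boolean function $f$ of decision tree complexity at most $k$ can be expressed as a disjoint union of at most $D^k$ ``leaf events,'' each of which is determined by the values of at most $k$ coordinates. More precisely, fix an optimal decision tree for $f$; every leaf $\ell$ corresponds to a partial assignment $v_{I_\ell}$ on a coordinate set $I_\ell$ with $|I_\ell|\leq k$, and the events $\{x : x_{I_\ell} = v_{I_\ell}\}$ partition $S$ as $\ell$ ranges over leaves. Let $L_1$ denote the set of accepting leaves. Then
\[
E_{\mathcal{P}}(f) \;=\; \sum_{\ell \in L_1} \Pr_{\mathcal{P}}[x_{I_\ell} = v_{I_\ell}], \qquad E_{\mathcal{Y}}(f) \;=\; \sum_{\ell \in L_1} \Pr_{\mathcal{Y}}[x_{I_\ell} = v_{I_\ell}].
\]
Applying the $(k,\delta)$-approximation inequality termwise and then the triangle inequality yields $|E_{\mathcal{Y}}(f) - E_{\mathcal{P}}(f)| \leq |L_1|\cdot \delta$. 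Finally, since the decision tree has depth at most $k$ and each internal node has at most $D$ children, the total number of leaves is at most $D^k$, so $|L_1|\leq D^k$, giving the desired bound.

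The only step with any subtlety is verifying that the leaf events indeed partition the sample space and that $f$ is constant on each such event; this is a standard property of decision trees, and once it is set up, the rest is just linearity of expectation and a count of leaves. I do not anticipate any real obstacle here.
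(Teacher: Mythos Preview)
Your proposal is correct and follows essentially the same approach as the paper: compose the two decision trees for the first claim, and for the second claim write $E(f)$ as a sum over accepting leaves (each a partial assignment on at most $k$ coordinates), apply the $(k,\delta)$-approximation bound termwise, and bound the number of leaves by $D^k$. Your write-up is in fact a bit more explicit than the paper's, but the argument is the same.
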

\begin{proof}%[Proof of Proposition \ref{prop:fooling_complexity}]
For the first claim we recall that a function $f$ having decision tree
complexity at most $k$ is equivalent to saying that we can determine
$f(x)$ for $x \in S$ by adaptively querying at most $k$ coordinates of
$x$. If this is true for $f_1$ and $f_2$ with decision tree complexity
$k_1$ and $k_2$ respectively then we can evaluate $f_1(x)
\wedge f_2(x)$ by adaptively querying at most $k_1 + k_2$ components
of $x$. Therefore the conjunction has decision tree complexity at
most $k_1 + k_2$.

For the second claim, we fix a decision tree for $f$ with depth at most $k$. Each one of the leaf-to-root paths in this tree corresponds to a partial assignment of values to at most $k$ components, and this assignment determines the value of $f$. The expectation of $f$ under any distribution is simply the sum of the probabilities of the paths resulting in a 1-evaluation at the leaf. Switching from a completely independent distribution to a $k$-wise independent distribution does not change these probabilities since the partial assignments involve at most $k$ variables. Similarly switching to a $(k,\delta)$-approximation changes each of these probabilities by at most $\delta$.  There are at most $D^k$ paths resulting in a 1-evaluation which implies that the  deviation of the expectation is at most $D^k\delta$.
\end{proof}

\iffalse
\medskip \noindent {\bfseries Remark:} Instead of assuming for each event $A$, the decision tree complexity of at most $k(A) = O(\min\{\log M, -\log x'(A)\})$ in Theorem~\ref{thm:parallel_alg} it suffices to demand a weaker assumption that the Boolean function for each event $A$ can be expressed by a  polynomial of degree at most $k(A)$ with coefficients and number of terms bounded by $2^{k(A)}$. In general any property that fulfills the above ``additivity'' and implies the property that any $(k,\delta)$-approximation is $2^{O(k)}\delta$-indistinguishable suffices. We do not know whether this suggested property itself behaves additively.
\nnote{The previous remark is not quite clear to me. What precisely is the ``additivity'' property? Is it 
the ``subadditivity'' property from the first item in the Prop? Also, we are saying that the property 
given by the polynomial in the remark can be used in place of decision trees, but we don't know if it
has the additivity property, but we also imply that additivity property is needed.}
\fi

The following lemma shows that using a $(k,\delta)$-approximate distribution instead of the original one does not change the performance of Algorithm 2 if the events have low decision tree complexity: 
\begin{lemma}\label{lemma:parallel-expected-witnesses}
Suppose that there exists a constant $c$ such that every event $A \in \A$ has decision tree complexity at most 
$c\min \{-\log x'(A),\log{M}\}$. Let $k=2c\gamma$ and $\delta^{-1} = 3 M^{2+2/\eps}D^{2c\gamma}$.
The expected number of forbidden witnesses consistent with a table $T$
that was created by a $(k,\delta)$-approximation for the distribution
of random tables is at most $1/2+1/3<1$.
\end{lemma}
\begin{proof}%[Proof of Corollary \ref{lemma:parallel-expected-witnesses}]
The event that a partial witness $\tau \in F$ is consistent with $T$ is exactly the conjunction of events $[v]$, $v \in \overline{V}(\tau)$.  Using Proposition~\ref{prop:fooling_complexity}, the decision tree complexity of this event is at most 
\begin{align*}
\sum_{v \in \overline{V}(\tau)} c \min\{\log M,-\log x'([v])\} &\leq c \sum_{v \in \overline{V}(\tau)} -\log x'([v]) \leq 2c\gamma,
\end{align*}
where the last inequality follows because by definition, forbidden
witnesses have weight at most $2 \gamma$.  Lemma~\ref{lemma:expected-no-of-witnesses} shows that using the original
independent distribution $\P$, the expected number of forbidden
witnesses occurring is at most $1/2$. The second claim of
Proposition~\ref{prop:fooling_complexity} proves that switching to a
$(k,\delta)$-approximation changes this expectation by at most
$D^k\delta = 1/(3 M^{2+2/\eps})$ for each of the $|F|$ witnesses.  To
complete the proof, observe that by Lemma~\ref{lemma:enumerate} we
have $|F| \leq M^{2+2/\eps}$.
\end{proof}

\subsection{The parallel algorithm and its analysis}

We can now describe our parallel algorithm. 

\begin{figure}[!ht]
\medskip\noindent{\bf Algorithm 4:\\ Parallel Deterministic Algorithm}
\begin{enumerate}
\item \label{step:parallel_loop0} Construct a small set of tables $Y$ which form the support of a $(k,\delta)$-approximate independent distribution $\mathcal{Y}$ using the construction mentioned in Sec.~\ref{subsec:limitedind}.
\item \label{step:parallel_loop1} For each table $T \in Y$ do in parallel:
	\begin{enumerate}
		\item \label{step:parallel_ph2_beg} For every variable
		$P_i \in \P$: initialize the pointer $t_i = 1$.

		\item While $\exists A\in \A$ that happens when
		$\forall P_i \in \P: P_i = T(i,t_i)$, do

			\begin{itemize}
				\item Compute, in parallel, a maximal
				independent set $I$ in the subgraph of
				$G_{\A}$ induced by the events that
				happen on the current assignment.
				\item Resample all $A \in I$ in parallel:
				For all $P_i \in \bigcup_{A \in I}
				\vbl(A)$, increment $t_i$ by one.
                                \item If $t_i = \ceil{\gamma/w_{min}} + 1$ (one more than the total number of samples for $P_i$ in a good table), then halt this thread of computation. 
		\end{itemize}
	\end{enumerate}
\item \label{step:parallel_ph2_end} Once a valid assignment is found
using one of the tables, output it and terminate.
\end{enumerate}
\end{figure}

%\begin{lemma}
%If the assumption of Theorem \ref{thm:parallel_alg} hold and each clause can be evaluated in time $t_{eval}$ on $M^{O(1)}$ processors then running the parallel algorithm with $k=2c\gamma$ and $\delta^{-1} = 3 M^{2+2/\eps}D^{2c\gamma}$ as in Lemma \ref{lemma:parallel-expected-witnesses} terminates with a valid solution. This takes $O\left(\frac{\gamma}{w_{min}} (t_{MIS} +t_{eval})\right)$ time, where $t_{MIS}$ is the time used to find an independent set on a graph of size $m$. The number of processors used is $M^{O(\frac{c}{\eps}\log D)}$.
%\end{lemma}

\begin{proof} [Proof of Theorem~\ref{thm:parallel_alg}]
We use Algorithm 4 to obtain a good evaluation. We already saw in Section~\ref{subsec:limitedind} that the support $Y$ of the $(k, \delta)$-approximation to the random distribution of tables in Step 1 can be generated efficiently within the time and the number of processors claimed. We now show that these resources also suffice for the rest of the steps in the algorithm. 

Lemma~\ref{lemma:parallel-expected-witnesses} guarantees that there is a table $T \in Y$ for which there is no forbidden witness consistent with it. Steps \ref{step:parallel_ph2_beg}--\ref{step:parallel_ph2_end} are the same as the parallel algorithm in \cite{MoserTardos}. 
We will show that on table $T$ this algorithm terminates within at most $\ceil{\gamma/w_{min}}$ steps: By using Lemma~4.1 of \cite{MoserTardos}, if the algorithm runs for $i$ iterations, then there exists a consistent
witness of height $i$. Such a witness has weight at least $i w_{min}$. 
We know from Lemma~\ref{lemma:range} and Lemma~\ref{lemma:witness-check} that no witness of weight more than $\gamma$ can
occur since otherwise a forbidden witness would be consistent with $T$. Hence we have $i \leq
\gamma/w_{min}$. This means that the thread for table $T$ does not attempt to increment the pointer $t_i$ beyond 
$\gamma/w_i$ on table $T$, and so this thread terminates with a good evaluation.  
Each of these $i$ iterations takes time $t_{eval}$ to evaluate all $m$ events and time $t_{MIS}$ to compute
the independent set on the induced dependency subgraph of size at most
$m$. This proves that after creating the probability space
$\mathcal{Y}$, the algorithm terminates in $O((t_{MIS} + t_{eval})\gamma/w_{min})$ time and the termination criterion guarantees
correctness. Adding this to $O(\gamma \log{D} + \log(1/w_{min}))$, the time to construct $Y$, we get that the total time the algorithm takes
is $O((t_{MIS} + t_{eval})\gamma/w_{min} + \gamma \log{D})$. The number of processors needed for the loop is
bounded by $M^{O(1)}$ for each of the $|Y|$ parallel
computations and thus $M^{O((c/\epsilon)\log D)}$ in total. 
\end{proof}

Again it is easy to obtain the $k$-CNF result as a corollary of the general algorithm:

\begin{proof} [Proof of Corollary \ref{thm:kCNF_deterministic_parallel}]
We apply the LLL in the same way to $k$-CNF as in the proof of Corollary \ref{thm:kCNF_deterministic}. Again we assume without loss of generality that $k = O(\log n)$ and again get $M=O(m k)$ and $w_{min}\approx k$. Since each clause depends only on $k$ variables a decision tree complexity of $O(k)$ is obvious. Finally using Theorem \ref{thm:parallel_alg} and an algorithm of Alon, Babai and Itai \cite{alon-babai-itai} or Luby~\cite{Luby} to compute the maximal independent set in time $t_{MIS} = O(\log^2 m)$ leads to the claimed running time.
\end{proof}

\section{Conclusion}

Moser and Tardos \cite{MoserTardos} raised the open question for a
deterministic LLL algorithm. We address this question and give a deterministic
parallel algorithm that works under nearly the same conditions as its
randomized versions. 

All known deterministic or (randomized) parallel
algorithms need a slack in the LLL conditions (see Table~\ref{table:relatedwork}).
It remains open to remove those $\eps$-slacks. Obtaining deterministic 
constructions for the problems in \cite{expLLL2} is another interesting
open question.

\paragraph{Acknowledgements}
We are thankful to the anonymous reviewers who caught an error in
an earlier version and whose comments greatly improved the
presentation of this paper. We also thank Aravind Srinivasan for inspiration
and Salil Vadhan and David Karger for many helpful comments.

\addcontentsline{toc}{section}{References}
\bibliographystyle{abbrv}    % bibliography using BibTex format
\bibliography{lll}

\begin{thebibliography}{10}

\bibitem{Alon}
N.~Alon.
\newblock {A Parallel Algorithmic Version of the Local Lemma}.
\newblock {\em Random Structures \&. Algorithms}, 2(4):367--378, 1991.

\bibitem{alon-babai-itai}
N.~Alon, L.~Babai, and A.~Itai.
\newblock A fast and simple randomized parallel algorithm for the maximal
  independent set problem.
\newblock {\em J. Algorithms}, 7(4):567--583, 1986.

\bibitem{Beck}
J.~Beck.
\newblock {An Algorithmic Approach to the {Lov{\'a}sz Local Lemma}}.
\newblock {\em Random Structures \&. Algorithms}, 2(4):343--366, 1991.

\bibitem{Charietal}
S.~Chari, P.~Rohatgi, and A.~Srinivasan.
\newblock {Improved Algorithms via Approximations of Probability
  Distributions}.
\newblock {\em Journal of Computer and System Sciences (JCSS)}, 61(1):81--107,
  2000.

\bibitem{CzumajS}
A.~Czumaj and C.~Scheideler.
\newblock {A new algorithm approach to the general Lov{\'a}sz Local Lemma with
  applications to scheduling and satisfiability problems (extended abstract)}.
\newblock In {\em STOC '00: Proceedings of the 32nd annual ACM Symposium on
  Theory of Computing}, pages 38--47, 2000.

\bibitem{czumaj2000cnu}
A.~Czumaj and C.~Scheideler.
\newblock {Coloring nonuniform hypergraphs: A new algorithmic approach to the
  general Lov{\'a}sz Local Lemma}.
\newblock {\em Random Structures \&. Algorithms}, 17(3-4):213--237, 2000.

\bibitem{ErdosLovasz}
P.~Erd\H{o}s and L.~Lov\'asz.
\newblock {Problems and results on 3-chromatic hypergraphs and some related
  questions}.
\newblock In {\em {A. Hajnal, R. Rado and V.T. S\'os, editors, Infinite and
  Finite Sets (Colloq., Keszthely, 1973; dedicated to P. Erd\H{o}s on his 60th
  birthday)}}, volume~2, pages 609--627, 1975.

\bibitem{Evenetal}
G.~Even, O.~Goldreich, M.~Luby, N.~Nisan, and B.~Velickovic.
\newblock {Efficient approximation of product distributions}.
\newblock {\em Random Structures \&. Algorithms}, 13(1):1--16, 1998.

\bibitem{Feige}
U.~Feige.
\newblock {On allocations that maximize fairness}.
\newblock In {\em SODA '08: Proceedings of the 19th annual ACM-SIAM Symposium
  on Discrete Algorithms}, pages 287--293, 2008.

\bibitem{expLLL2}
B.~Haeupler, B.~Saha, and A.~Srinivasan.
\newblock {New Constructive Aspects of the Lov\'asz Local Lemma}.
\newblock {\em Journal of the ACM (JACM)}, 58(6):28, 2011.

\bibitem{knuth1969countingtrees}
D.~Knuth.
\newblock {\em {The art of computer programming, Volume 1: Fundamental
  Algorithms}}.
\newblock Addison-Wesley, Reading, MA, 1969.

\bibitem{LeightonMaggsRao}
F.~T. Leighton, B.~M. Maggs, and S.~Rao.
\newblock {Packet Routing and Job-Shop Scheduling in O(Congestion + Dilation)
  Steps}.
\newblock {\em Combinatorica}, 14(2):167--186, 1994.

\bibitem{Luby}
M.~Luby.
\newblock {A Simple Parallel Algorithm for the Maximal Independent Set
  Problem}.
\newblock {\em SIAM Journal of Compututing (SICOMP)}, 15(4):1036--1053, 1986.

\bibitem{MolloyReedBook}
M.~Molloy and B.~Reed.
\newblock {\em {Graph Colouring and the Probabilistic Method}}.
\newblock Springer, 2000.

\bibitem{MolloyReed}
M.~Molloy and B.~A. Reed.
\newblock {Further Algorithmic Aspects of the Local Lemma}.
\newblock In {\em STOC '98: Proceedings of the 30th annual ACM Symposium on
  Theory of Computing}, pages 524--529, 1998.

\bibitem{Moser08A}
R.~A. Moser.
\newblock {Derandomizing the {Lov\'asz Local Lemma} more effectively}.
\newblock {\em CoRR}, abs/0807.2120, 2008.

\bibitem{Moser08B}
R.~A. Moser.
\newblock {A constructive proof of the Lov\'asz Local Lemma}.
\newblock In {\em STOC '09: Proceedings of the 41st annual ACM Symposium on
  Theory of Computing}, pages 343--350, 2009.

\bibitem{MoserTardos}
R.~A. Moser and G.~Tardos.
\newblock A constructive proof of the general {Lov\'asz Local Lemma}.
\newblock {\em Journal of the ACM (JACM)}, 57:11:1--11:15, February 2010.

\bibitem{NaorNaor}
J.~Naor and M.~Naor.
\newblock {Small-Bias Probability Spaces: Efficient Constructions and
  Applications}.
\newblock {\em SIAM Journal of Computing (SICOMP)}, 22(4):838--856, 1993.

\bibitem{spencer1975ramsey}
J.~Spencer.
\newblock {Ramsey's theorem--A new lower bound}.
\newblock {\em Journal of Combinatorial Theory, Series A}, 18(1):108--115,
  1975.

\bibitem{Srinivasan}
A.~Srinivasan.
\newblock {Improved Algorithmic Versions of the Lov\'{a}sz Local Lemma}.
\newblock In {\em SODA '08: Proceedings of the 19th annual ACM-SIAM Symposium
  on Discrete Algorithms}, pages 611--620, 2008.

\end{thebibliography}
\end{document}